\newcommand{\deltasplit}{$\delta$-linear threshold}
\newcommand{\xhdr}[1]{\vspace{0.5mm}\noindent{\textbf{#1.}}\hspace{0.5mm}}
\definecolor{mylinkcolor}{RGB}{0,0,140}
       \newcommand{\vlm}{\textbf{v}}
	\newcommand{\vbr}{\vol(\bar{R})}
\newcommand{\hlc}{\textbf{HLC}}
\newcommand{\cond}{\textbf{cond}}
\newcommand{\lcond}{\textbf{local-cond}}
\newcommand{\ncut}{\textbf{ncut}}
\newcommand{\cut}{\textbf{cut}}
\newcommand{\vol}{\textbf{vol}}
\newcommand{\volh}{\textbf{vol}_\mathcal{H}}
\renewcommand{\vec}[1]{\boldsymbol{\mathrm{#1}}}
\DeclareMathOperator*{\minimize}{minimize}
\DeclareMathOperator{\argmin}{argmin}
\providecommand{\vw}{\ensuremath{\vec{w}}}
\begin{document}

\title{Minimizing Localized Ratio Cut Objectives in Hypergraphs}


%

\begin{abstract}
	Hypergraphs are a useful abstraction for modeling multiway relationships in data, 
and hypergraph clustering is the task of detecting groups of closely related nodes in such data.
\emph{Graph} clustering has been studied extensively, and there are numerous methods for detecting small, localized clusters without having to explore an entire input graph. However, there are only a few specialized approaches for localized clustering in hypergraphs. Here we present a framework for local hypergraph clustering based on minimizing localized ratio cut objectives. Our framework takes an input set of reference nodes in a hypergraph and solves a sequence of hypergraph minimum $s$-$t$ cut problems in order to identify a nearby well-connected cluster of nodes that overlaps substantially with the input set. 

Our methods extend graph-based techniques but are significantly more general and have new output quality guarantees. First, our methods can minimize new generalized notions of hypergraph cuts, which depend on specific configurations of nodes within each hyperedge, rather than just on the number of cut hyperedges. Second, our framework has several attractive theoretical properties in terms of output cluster quality. Most importantly, our algorithm is strongly-local, meaning that its runtime depends only on the size of the input set, and does not need to explore the entire hypergraph to find good local clusters. 
We use our methodology to effectively identify clusters in hypergraphs of real-world data with millions of nodes, millions of hyperedges, and large average hyperedge size with runtimes ranging between a few seconds and a few minutes.
\end{abstract}

\author{Nate Veldt}
\affiliation{
	\institution{Cornell University}
	\department{Center for Applied Mathematics}
}
\email{nveldt@cornell.edu}

\author{Austin R.~Benson}
\affiliation{%
	\institution{Cornell University}
	\department{Department of Computer Science}
}
\email{arb@cs.cornell.edu}

\author{Jon Kleinberg}
\affiliation{%
	\institution{Cornell University}
	\department{Department of Computer Science}
}
\email{kleinberg@cornell.edu}

\maketitle


\section{Introduction}
Graphs are a common mathematical abstraction for modeling pairwise interactions between objects in a dataset.
A standard task in graph-based data analysis is to identify well-connected \emph{clusters} of nodes, which share more edges with each other than the rest of the graph~\cite{schaeffer2007graphclustering}.
For example, detecting clusters in a graph is used to identify communities~\cite{fortunato2016communitydetection}, predict class labels in machine learning applications~\cite{blum2001learning}, and segment images~\cite{shimalik-ncut}.
Standard models for such clusters are \emph{ratio cut} objectives, which measure the ratio between the number of edges leaving a cluster (the \emph{cut}) and some notion of the cluster's size (e.g., the number of edges or nodes in the cluster); common ratio cut objectives include conductance, sparsest cut, and normalized cut.
Ratio cut objectives are intimately related to spectral clustering techniques, with the latter providing approximation guarantees for ratio cut objectives (such as conductance) via so-called \emph{Cheeger} inequalities~\cite{chung1997spectral}.
In some cases, these ratio cut objectives are optimized over an entire graph to solve a global clustering or classification task~\cite{shimalik-ncut}.
In other situations, the goal is to find sets of nodes that have a small ratio cut and are localized to a certain region of a large graph~\cite{AndersenChungLang2006,veldt16simple,Orecchia:2014:FAL:2634074.2634168,Andersen:2008:AIG:1347082.1347154}.

Recently, there has been a surge of hypergraph methods for machine learning and data mining~\cite{Agarwal2005beyond,Agarwal2006holearning,Zhou2006learning,panli2017inhomogeneous,panli_submodular,benson2018simplicial,chitra2019random,yadati2019hypergcn}, as hypergraphs can better model multiway relationships in data.
Common examples of multiway relationships include academic researchers co-authoring papers, retail items co-purchased by shoppers, and sets of products or services reviewed by the same person. 
Due to broader modeling ability, there are many hypergraph generalizations of graph-cut objectives,
including hypergraph variants of ratio cut objectives like conductance and normalized cut~\cite{BensonGleichLeskovec2016,Zhou2006learning,panli2017inhomogeneous,chan2018spectral,chan2018generalizing}.

Nevertheless, there are numerous challenges in extending graph-cut techniques to the hypergraph setting, and current methods for hypergraph-based learning are much less developed than their graph-based counterparts.
One major challenge in generalizing graph cut methods is that the concept of a cut hyperedge --- how to define it and how to penalize it --- is more nuanced than the concept of a cut edge.
While there is only one way to separate the endpoints of an edge into two clusters, there are several ways to split up a set of three or more nodes in a hypergedge.
Many objective functions model cuts with an \emph{all-or-nothing} penalty function, which assigns the same penalty to any way of splitting up the nodes of the hyperedge (and a penalty of zero if all nodes in the hyperedge are placed together)~\cite{lawler1973,ihler1993modeling,hadley1995} (\cref{fig:aon}).
However, a common practical heuristic is a \emph{clique expansion}, which replaces each hyperedge with a weighted clique in a graph~\cite{panli2017inhomogeneous,hadley1995,Zhou2006learning,zien1999,BensonGleichLeskovec2016} (\cref{fig:clique}).
The advantage is that graph methods can be directly applied, but this heuristic actually penalizes cut hyperedges differently than the all-or-nothing model.
Another downside is that for hypergraph with large hyperedges, clique expansion produces a very dense graph.
\begin{figure}[t]
	\centering
	\subfloat[All-or-nothing cut \label{fig:aon}] 
	{\includegraphics[width=.33\linewidth]{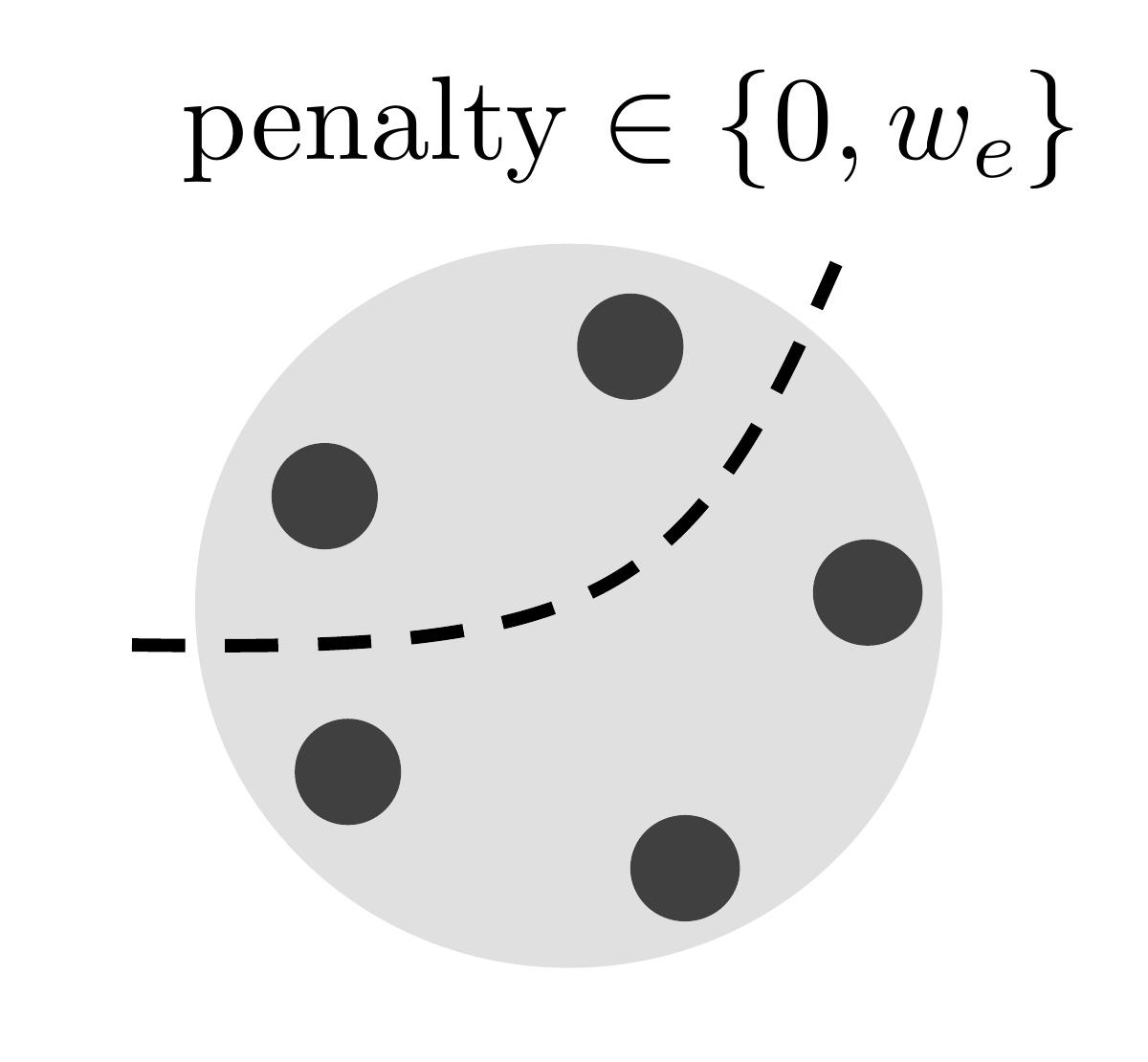}}\hfill
	\subfloat[Clique expansion \label{fig:clique}] 
	{\includegraphics[width=.33\linewidth]{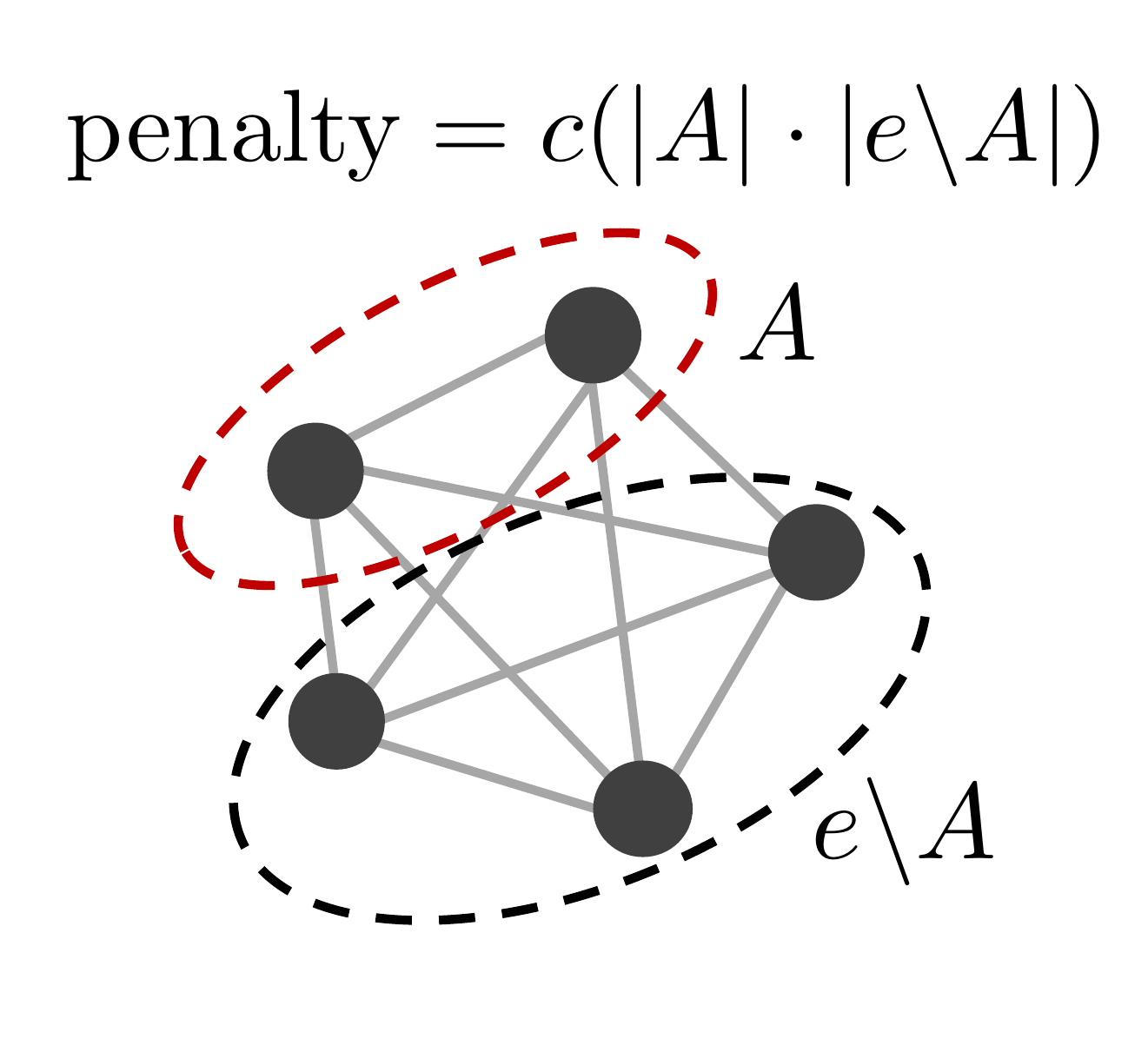}}
		\subfloat[General cut function \label{fig:gencut}] 
	{\includegraphics[width=.33\linewidth]{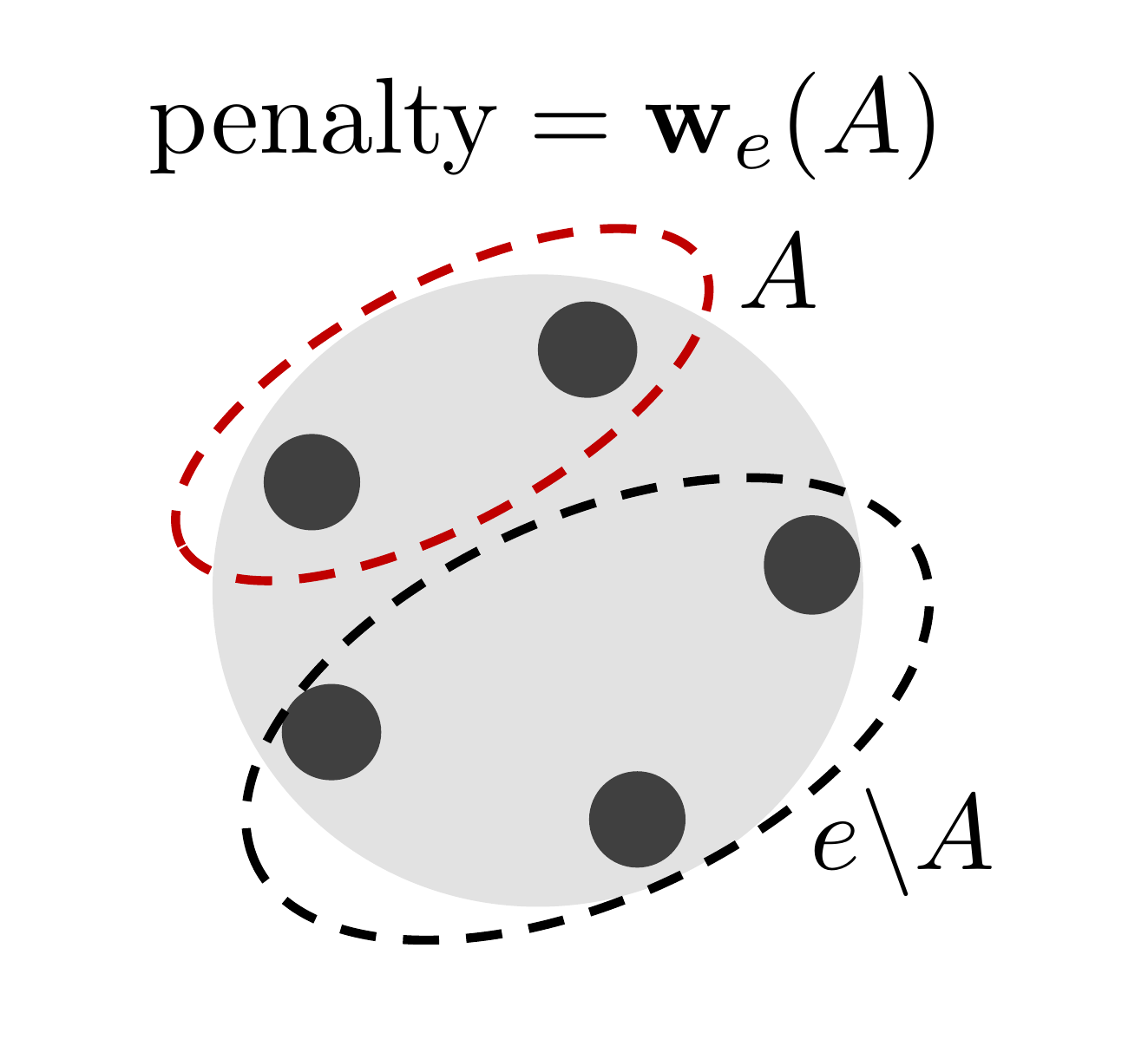}}
	\caption{Three different models for hyperedge penalties. The all-or-nothing cut has a scalar penalty $w_e$ for any way of cutting a hyperedge $e$; the penalty is zero if $e$ is uncut. The clique expansion converts a hyperedge to a clique, leading to a cut penalty proportional to the product of the set sizes that $e$ is split into. Recent generalized hyperedge cut functions assign penalties for each subset of nodes in $e$~\cite{panli2017inhomogeneous,panli_submodular,veldt2020hypercuts}.}
	\label{fig:cuts}
\end{figure}

Solving an all-or-nothing cut or applying clique expansion are only two specific models for higher-order relationships.
And if one uses a ratio cut model for clusters in a hypergraph, how to penalize a cut hyperedge may depend on the application.
Along these lines, \emph{inhomogeneous hypergraphs} model every possible way to split up a hyperedge to assign different penalties~~\cite{panli2017inhomogeneous};
however, these more sophisticated models are still approximated with weighted clique expansions.
In recent work, we developed a framework for \emph{exactly} solving hypergraph $s$-$t$ cut problems under similar notions of generalized hyperedge \emph{splitting functions}~\cite{veldt2020hypercuts} (\cref{fig:gencut}).
However, these techniques have not been applied to ratio cut objectives or to solve practical machine learning and data mining problems.

There is also little work on \emph{localized} hypergraph clustering, i.e., methods for finding well-connected sets of nodes that are biased towards a given region of a hypergraph.
Existing approaches~\cite{hao2017local,li2018tail} use random-walk-based local graph clustering methods~\cite{AndersenChungLang2006} on clique expansions.
\emph{Flow-based} methods are an alternative to random walks~\cite{Andersen:2008:AIG:1347082.1347154,Orecchia:2014:FAL:2634074.2634168,LangRao2004,veldt16simple,fountoulakis2020flowbased},
which have strong runtime and ratio cut quality guarantees.
These methods solve max.\ $s$-$t$ flow / min.\ $s$-$t$ cut problems as a subroutine.
Despite the success of these methods for graphs, they have not been extended to the hypergraph setting.

\xhdr{The present work: flow-based local hypergraph clustering}
Here, we develop a flow-based framework for local hypergraph clustering
based on minimizing localized ratio cut objectives. Our framework takes in a
set of input nodes and solves a sequence of hypergraph $s$-$t$ cut problems to
return a well-connected cluster of nodes that has a high overlap with the input
set, where ``well-connected'' is formalized with a ratio-cut-style objective
that can incorporate a wide range of hypergraph cut functions, including the
all-or-nothing penalty or the clique-expansion-penalty.

Unlike clique expansion techniques, we do not simply reduce to an existing graph technique.
Instead, our flow-based framework uses minimum \emph{hypergraph} $s$-$t$ cut computations.
Thus, we can leverage recent results on minimum $s$-$t$ hypergraph cut algorithms~\cite{veldt2020hypercuts}
to \emph{exactly} solve localized hypergraph ratio cut objectives for
generalized notions of hypergraph cuts. 
Importantly, although the $s$-$t$ hypergraph cut solver only relies on solving graph $s$-$t$ cut problems, 
the hypergraph ratio-cut is exactly optimized.
Our implementation can also make use of high-performance maximum $s$-$t$ flow solvers.

Our method comes with new guarantees on hypergraph conductance and
normalized cut. The conductance results generalize
previous guarantees in the graph setting to hypergraphs. The normalized
cut guarantees are the first of their kind; since graphs are a special case
of hypergraphs, we also have new results for the graph setting as a bonus. 
Our theory provides tighter guarantees than those
obtained by applying existing results for approximate recovery of
low-conductance sets, even in the graph setting.

A major feature of our methods is that they run in \emph{strongly-local time},
meaning that the runtime is dependent only on the size of the input set, rather than the entire hypergraph. 
Therefore, we can find optimal clusters without even seeing the entire hypergraph,
making our algorithms remarkably scalable in theory and practice.

We demonstrate our framework on large real-world hypergraphs, detecting product
categories in Amazon review data and similar questions on Stack Overflow data.
Given a small set of seed nodes, our method finds
clusters with thousands of nodes from hypergraphs with millions of nodes and
hyperedges and large average hyperedge size, often within a few seconds. 
Our methods are also more accurate than heuristics based on refining
neighborhoods of a seed set or using existing graph methods on
clique expansions.

\section{Preliminaries and Related Work}

\subsection{Background: Local Conductance in Graphs}
Let $G = (V,E)$ be an undirected graph and $w_{ij} \geq 0$ be the weight for edge $(i,j) \in E$.
The degree of a node $v$ is $d_v=\sum_{u \in N_v} w_{uv}$, where $N_v$ is the set of nodes sharing an edge with $v$.
A common graph clustering objective is conductance, defined for $S \subseteq V$ by
\begin{equation}
\label{conductance}
{\textstyle \cond_G(S)  = \frac{\cut(S)}{\min \{ \vol(S), \vol(\bar{S})\}},}
\end{equation}
where $\vol(S) = \sum_{v \in S} d_v$ is the volume of nodes in $S$, and $\cut(S) = \sum_{i \in S, j \in \bar{S}} w_{ij}$, which equals $\cut(\bar{S})$ by definition.
Here, $\bar{S} = V \backslash S$ is the complement set of $S$.
A related objective, which differs by at most a factor of two from conductance, is normalized cut~\cite{shimalik-ncut}:
\begin{equation}
\label{ncut}
{\textstyle \ncut_G(S)  = \frac{\cut(S)}{\vol(S)} + \frac{\cut(\bar{S})}{\vol(\bar{S})} = \vol(V)\frac{\cut(S)}{\vol(\bar{S}) \vol({S})}.}
\end{equation}

Conductance and normalized cut are both NP-hard to minimize~\cite{wagner1993between}.
However, localized variants can be minimized in polynomial time with repeated max.\ $s$-$t$ flow computations~\cite{Andersen:2008:AIG:1347082.1347154,LangRao2004,Orecchia:2014:FAL:2634074.2634168,veldt16simple,Veldt2019flow}.
For example, given nodes $R \subseteq V$ with $\vol(R) \leq \vol(\bar{R})$, the following objective can be minimized in polynomial time~\cite{LangRao2004}:
\begin{equation}
\label{mcs}
\minimize_{S\subseteq R} \,\, \cond_G(S) \,.
\end{equation}
In other words, given a region defined by a \emph{reference} set $R$, one can find the minimum conductance subset of $R$ in polynomial time, even though minimizing conductance over an entire graph is NP-hard. A more general \emph{local conductance} objective is
\begin{equation}
\label{localcond}
{\textstyle \lcond_{R,\varepsilon}(S) = 
\frac{\cut(S)}{\vol(S\cap R) - \varepsilon \vol(S \cap \bar{R})}\,.}
\end{equation}
Objective~\eqref{localcond} is minimized over all sets of nodes for which the denominator is positive (to avoid trivial outputs).
The denominator rewards sets $S$ that overlap with $R$.
The \emph{locality} parameter $\varepsilon$ controls the penalty for including nodes outside $R$.
As $\varepsilon \rightarrow \infty$, minimizing Eq.~\eqref{localcond} over sets producing a positive denominator
becomes equivalent to Eq.~\eqref{mcs}.
Andersen and Lang~\cite{Andersen:2008:AIG:1347082.1347154} showed how to minimize this objective for $\varepsilon = \vol(R)/\vol(\bar{R})$, and faster algorithms were later developed for $\varepsilon \gg \vol(R)/\vol(\bar{R})$~\cite{Orecchia:2014:FAL:2634074.2634168,veldt16simple}.
These algorithms repeatedly solve maximum flow problems on an auxiliary graph.

Minimizing objective~\eqref{localcond} can also provide cluster quality guarantees in terms of standard conductance (Eq.~\eqref{conductance}).
For example, the optimal set for \eqref{localcond} with $\varepsilon = \vol(R)/\vol(\bar{R})$ has conductance within a small factor of the conductance of \emph{any} set with a certain amount of overlap with $R$~\cite{Andersen:2008:AIG:1347082.1347154},
and there are related results for other values of $\varepsilon$ and objectives~\cite{Veldt2019flow,veldt16simple,Orecchia:2014:FAL:2634074.2634168}.
These are often called \emph{cut improvement} guarantees, since the ratio cut score of the output set \emph{improves} upon the score of the input $R$. See the work of Fountoulakis et al.~\cite{fountoulakis2020flowbased} for a detailed survey on flow-based methods for solving objective~\eqref{localcond}.

\subsection{Background: Generalized Hypergraph Cuts}
\label{sec:hyperst}
We now consider a hypergraph $\mathcal{H} = (V, E)$, where each edge $e \in E$ is a subset of $V$ (an undirected graph is then the special case where $\lvert e \rvert = 2$ for all $e \in E$).
A hypergedge $e \in E$ is \emph{cut} by a set $S \subseteq V$ if $e \cap S \neq \emptyset$ and $e \cap \bar{S} \neq \emptyset$, i.e., the hyperedge spans more than one cluster. We denote the set of edges cut by $S$ by $\partial S$. 
The most common generalization of graph cut penalties to hypergraphs is to assign no penalty if $e \in E$ is not cut,
but assign a fixed-weight scalar penalty of $w_e$ for any way of cutting $e$~\cite{lawler1973,ihler1993modeling,hadley1995}.
Inhomogeneous hypergraphs~\cite{panli2017inhomogeneous} and submodular hypergraphs~\cite{panli_submodular} generalize this by associating a weight \emph{function} with each edge, rather than a scalar; in this model, every distinct way of separating the nodes of a hyperedge can have its own penalty. Recently we considered these types of hyperedge weight functions, which we call \emph{splitting functions}, in the context of hypergraph $s$-$t$ cut problems~\cite{veldt2020hypercuts}.

We cover the splitting function terminology for general cut penalties here.
Each edge $e \in E$ is associated with a
\emph{splitting function} $\vw_e\colon A \subseteq e \rightarrow \mathbb{R}_{\ge 0}$ that maps each subset $A \subseteq e$ to a nonnegative splitting penalty.
Weights on edges can be directly incorporated into the splitting function $\vw_e$.
By definition, splitting functions are required to be symmetric and penalize only cut hyperedges:
\begin{align}
\label{eq:splitting_sym} 
\vw_e (A) = \vw_e (e \backslash A) \text{ and } \vw_e (e) = \vw_e(\emptyset) = 0 \,.
\end{align}
A number of different generalized hypergraph cut problems are known to be easier to solve or approximate when splitting functions are \emph{submodular}~\cite{panli2017inhomogeneous,panli_submodular,veldt2020hypercuts}. This means that for all $A\subseteq e$ and $B \subseteq e$,
\begin{equation}
\label{submodular}
\vw_e(A) + \vw_e(B) \geq \vw_e(A\cup B) + \vw_e(A\cap B)\,.
\end{equation}

A splitting function is \emph{cardinality-based} if it depends only on the number of nodes on each side of a split:
\begin{equation}
\label{cardbased}
\vw_e(A) = \vw_e(B) \hspace{0.3cm} \text{ whenever $|A| = |B|$}\,.
\end{equation}
Given a splitting function for each hyperedge, the generalized hypergraph cut penalty for a set $S \subseteq V$ is given by
\begin{align}
\textstyle \cut_\mathcal{H}(S) = \sum_{e \in E} \vw_e(e \cap S).
\end{align}
(By the symmetry constraint in Eq.~\eqref{eq:splitting_sym}, $\cut_\mathcal{H}(S) = \cut_\mathcal{H}(\bar{S})$.)
The generalized hypergraph $s$-$t$ cut objective is then:
\begin{align}
\label{eq:genhypstcut}
\minimize_{S \subseteq V} \,\,& \cut_\mathcal{H}(S), \text{ subject to } s \in S, t \in \bar{S}\,,
\end{align}
where $s$ and $t$ are designated source and sink nodes.
With the all-or-nothing splitting function, Eq.~\eqref{eq:genhypstcut} is solvable in polynomial time via reduction to a directed graph $s$-$t$ cut problem~\cite{lawler1973}.
More generally, if all splitting functions are submodular, the hypergraph $s$-$t$ cut problem is equivalent to minimizing a sum of submodular functions, which can be solved using general submodular function minimization~\cite{grotschel1981,Orlin2009,Schrijver:2000:CAM:361537.361552}
or specialty solvers for sums of submodular functions~\cite{kolmogorov2012minimizing,li2018revisiting,ene2017decomposable,stobbe2010efficient}.
Recently, we showed that when every splitting function is cardinality-based (Eq.~\eqref{cardbased}),
the hypergraph $s$-$t$ cut can be solved via reduction to a graph $s$-$t$ cut problem if and only if all splitting functions are submodular~\cite{veldt2020hypercuts}.
Cardinality-based submodular splitting functions are the focus of our models.

\subsection{Hypergraph Ratio Cut Objectives}
Given this general framework for hypergraph cuts, we present definitions for hypergraph conductance and normalized cut.
Let $\mathcal{H} = (V,E)$ be a hypergraph.
The hypergraph volume of $S \subseteq V$ is
\begin{equation}
\label{hypervolume}
\textstyle \vol_\mathcal{H}(S) = \sum_{v \in S} d_v\,,
\end{equation}
where $d_v = \sum_{e: v \in e} \vw_e(\{v\})$ is the hypergraph degree of $v$~\cite{panli2017inhomogeneous}. 
We define hypergraph conductance and normalized cut of $S \subseteq V$ as
\begin{align}
\textstyle \cond_\mathcal{H}(S) &= \textstyle \frac{\cut_\mathcal{H} (S)}{\min \{ \vol_\mathcal{H}(S), \vol_\mathcal{H}(\bar{S})\}}\label{hypercond} \\
\textstyle \ncut_\mathcal{H}(S) &= \textstyle \frac{\cut_\mathcal{H}(S)}{\vol_\mathcal{H}(S)} + \frac{\cut_\mathcal{H}(\bar{S})}{ \vol_\mathcal{H}(\bar{S})}\,.
\label{hyperncut}
\end{align}
When $\mathcal{H}$ is a graph, these reduce to the definitions of conductance~\eqref{conductance} and normalized cut~\eqref{ncut} in graphs.
The hypergraph conductance in Eq.~\eqref{hypercond} has been used with the all-or-nothing splitting function~\cite{BensonGleichLeskovec2016,chan2018generalizing,chan2018spectral}.
Eq.~\eqref{hyperncut} generalizes the version of hypergraph normalized cut from Zhou et al.~\cite{Zhou2006learning}, which is
the special case of
\begin{equation*}
\textstyle \vw_e(A) = \frac{w_e}{|e|} \cdot |A|\cdot |e \backslash A| \hspace{.2cm} \text{for all $A \subseteq e$,}
\end{equation*}
where $w_e$ is a scalar associated with $e \in E$.
These objectives correspond to the notions of hypergraph normalized cut and conductance considered by Li and Milenkovic~\cite{panli2017inhomogeneous,panli_submodular}.


\section{Hypergraph Local Conductance}
\label{sec:hlc}
We now define our new localized hypergraph ratio cut objectives. Let $\mathcal{H} = (V,E)$ be a hypergraph and $R$ a set of input nodes.
We define a function $\Omega_{R,\varepsilon}$, which measures the overlap between $R$ and another set of nodes $S$, parameterized by some $\varepsilon \geq \volh(R)/\volh(\bar{R})$. 
\begin{equation}
\label{overlap}
\Omega_{R,\varepsilon}(S) = \vol_\mathcal{H}(S\cap R) - \varepsilon \vol_\mathcal{H}(S \cap \bar{R})\,.
\end{equation}
This is our hypergraph analog to the denominator in Eq.~\eqref{localcond}.
To find a good cluster of nodes in $\mathcal{H}$ ``near'' $R$, we minimize
\begin{equation}
\label{hlc}
\textbf{HLC}_{R,\varepsilon}(S) = 
\begin{cases}
\frac{\cut_\mathcal{H}(S)}{\Omega_{R,\varepsilon}(S)} & \text{if $\Omega_{R,\varepsilon}(S) > 0$}\\
\infty & \text{otherwise,}
\end{cases}
\end{equation}
which we call hypergraph $(R,\varepsilon)$-localized conductance.
When $R$ and $\varepsilon$ are clear from context, we refer to~\eqref{hlc} as HLC (hypergraph localized conductance), 
denoting its value by $\textbf{HLC}(S)$.
This objective reduces to the graph case when the hypergraph is a graph.

In this section, we show how to minimize \hlc{}, given access to a minimum hypergraph $s$-$t$ cut solver
and consider cases where such solvers can be easily implemented with standard graph $s$-$t$ cut solvers.
\Cref{sec:strongly_local} shows how to optimize the procedure outlined in this section
(more formally, to have strongly-local runtime guarantees), and
\cref{sec:improve} adapts these results to provide bounds on the hypergraph conductance and normalized cut objectives.

\subsection{Minimizing the HLC Objective}
\label{sec:min_hlc}
We now provide a procedure that minimizes \hlc{}, given polynomially many queries to a solver for $\cut_\mathcal{H}$.
\Cref{sec:new_splitting} then considers cases where the solver itself requires polynomial time.
Let $\mathcal{H} = (V,E)$ be the original input hypergraph and $R$ the input set.
We minimize \hlc{} by repeatedly solving hypergraph min.\ $s$-$t$ cut problems on an \emph{extended} hypergraph $\mathcal{H}_\alpha$, parameterized by $\alpha \in (0,1)$:
\begin{itemize}
	\item Keep all of $\mathcal{H} = (V,E)$ with original splitting functions.
	\item Introduce a source node $s$ and sink node $t$.
	\item For each $r \in R$, add an edge $(s,r)$ with weight $\alpha d_r$. 
	\item For each $j \in \bar{R}$, add an edge $(j,t)$ with weight $\alpha \varepsilon d_j$.
\end{itemize} 
By construction, $\mathcal{H}_\alpha$ contains hyperedges from $\mathcal{H}$ and 
pairwise edges attached to source and sink nodes.
Each hyperedge $e$ in $\mathcal{H}$ is associated with a splitting function $\vw_e$.
We call edges adjacent to $s$ and $t$ \emph{terminal} edges, and these have standard cut penalty: 
0 if the edge is not cut, otherwise the penalty is the weight of the edge. 
For any $S \subseteq V$, the value of the hypergraph cut $S \cup \{s\}$ in $\mathcal{H}_\alpha$ is
\begin{equation}
\label{hypergraphcut}
\textbf{H-st-cut}_\alpha(S) = \cut_\mathcal{H}(S) + \alpha \vol_\mathcal{H}(\bar{S} \cap R) + \alpha \varepsilon \vol_\mathcal{H}(S \cap \bar{R}).
\end{equation}

Choosing $S = \emptyset$ gives an upper bound of $\alpha \vol_\mathcal{H}(R)$ on the minimum cut score.
Thus, if the minimizer $S^*$ for Eq.~\eqref{hypergraphcut} has a cut score strictly less than $\alpha \vol_\mathcal{H}(R)$,
then $S^*$ must be nonempty, and we can rearrange~\eqref{hypergraphcut} to show that
\begin{equation}
	\hlc(S^*) = \frac{\cut_\mathcal{H}(S^*)}{\vol_\mathcal{H}(S^*\cap R) - \varepsilon \vol_\mathcal{H}(S^* \cap \bar{R})} < \alpha \,.
\end{equation}
Thus, for any $\alpha \in (0,1)$, to find out if there is a nonempty $S \subseteq V$ with HLC value less than $\alpha$,
it suffices to solve a generalized hypergraph $s$-$t$ cut problem.
\Cref{alg:1} gives a procedure for minimizing \textbf{HLC}, based on repeatedly solving objective~\eqref{hypergraphcut} for smaller values of $\alpha$ until no more improvement in the HLC objective is possible. 
\begin{algorithm}[t]
	\caption{Hypergraph local conductance minimization.}
	\label{alg:1}
	\begin{algorithmic}
		\State \textbf{Input:} $\mathcal{H}$, $R$, $\varepsilon \geq \volh(R)/\volh(\bar{R})$, $\cut_\mathcal{H}$.
		\State Set $\alpha = \hlc(R)$ and $S = R$
		\Do
		\State Update $S_{\text{best}} \leftarrow S$ and save $\alpha_0 \leftarrow \alpha$
		\State $S \leftarrow \arg \min_{S'} \textbf{H-st-Cut}_{\alpha}(S')$ 
		\State $\alpha \leftarrow \hlc(S)$
		\doWhile{$\alpha < \alpha_0$}
		\State \textbf{Return:} $S_{\text{best}}$
	\end{algorithmic}
\end{algorithm}



\subsection{A New Hyperedge Splitting Function}\label{sec:new_splitting}
\Cref{alg:1} repeatedly solves the $\textbf{H-st-cut}$ objective~\eqref{hypergraphcut} in an \emph{auxiliary} hypergraph $\mathcal{H}_\alpha$.
For submodular splitting functions, this can be done in polynomial time with methods for minimizing sums of submodular functions~\cite{kolmogorov2012minimizing,li2018revisiting,ene2017decomposable,stobbe2010efficient}. For the more restrictive class of cardinality-based submodular splitting functions, we only need to solve directed graph $s$-$t$ cut problems~\cite{veldt2020hypercuts}.
Implementations of such solvers are readily available and perform well in practice.

As an example, we present a new class of cardinality-based splitting functions that depends on a single tunable integer parameter $\delta \geq 1$,
which we use for our numerical experiments for its modeling capability and computational appeal:
\begin{equation}
\label{eq:tl}
\vw_e(A) = \min \{ \delta , |A|, |e\backslash A| \} \hspace{0.4cm} \text{ for any $A \subseteq e$.}
\end{equation}
We call this the \emph{\deltasplit{}} splitting function, since the penalty is linear in terms of the small side of the cut, up until a maximum penalty of $\delta$.
In other words, in a split hyperedge, we incur a unit cost for adding another node to the small side of the cut, up until we reach $\delta$ such nodes.
The $\delta = 1$ case is equivalent to the unweighted all-or-nothing cut.
For large enough $\delta$, the \deltasplit{} is the linear hypergedge splitting penalty, 
which is equivalent to applying a star expansion to the hypergraph~\cite{zien1999}.
Choosing different values for $\delta$ interpolates between these common splitting functions, which
enables the detection of different types of cut sets in a hypergraph and provides the data modeler with flexibility.

We now show how to efficiently optimize $s$-$t$ hypergraph cuts for this splitting function using graph $s$-$t$ cuts.
Let $\mathcal{H} = (V,E)$ be a hypergraph, where each edge is associated with the \deltasplit{} splitting function.
A minimum $s$-$t$ cut problem in $\mathcal{H}$ can be reduced to a minimum $s$-$t$ cut problem in a new directed graph $G_\mathcal{H}$ by replacing each $e \in E$ with the following \emph{gadget}:
\begin{itemize}
\item Introduce two auxiliary nodes $v_e'$ and $v_e''$.
\item Create a directed edge from $v_e'$ to $v_e''$ with weight $\delta$.
\item For each $v \in e$, add directed edges $(v, v_e')$ and $(v_e'', v)$, both with weight 1.
\end{itemize}
In any min.\ $s$-$t$ cut solution in $G_\mathcal{H}$, the auxiliary nodes of $e$ 
are arranged in a way that leads to a minimum possible cut. 
If $e$ is cut and $A \subseteq e$ is on the source side of the cut, 
then the penalty in $G_\mathcal{H}$ removes all directed paths from $A$ to $e\backslash A$ 
by taking the smaller penalty among three options: 
cutting (i) the middle edge $\delta$, (ii) all edges from $A$ to $v_e'$, or (iii) all edges from $v_e''$ to $e\backslash A$.
Thus, the penalty in $G_\mathcal{H}$ at this gadget will be exactly the hypergraph splitting penalty~\eqref{eq:tl}.
This gadget-based approach is related to our
recent techniques for modeling general cardinality-based submodular splitting functions~\cite{veldt2020hypercuts}.
However, the approach here requires fewer auxiliary nodes and directed edges.
In Section~\ref{sec:experiments}, we show that minimizing HLC with the 
\deltasplit{} penalty for varying $\delta$ enables us to
efficiently detect better clusters in a large hypergraph.

Finally, we provide an upper bound on the number of minimum $s$-$t$ cut problems Algorithm~\ref{alg:1} must solve if the \deltasplit{} penalty is applied. A proof is given in the appendix.
\begin{theorem}
	\label{thm:itbound}
	Let $S_i$ be the set returned after the $i$th iteration of Alg.~\ref{alg:1}. 
	The value of $\cut_\mathcal{H}(S_i)$ strictly decreases until the last iteration. 
	Thus, with the \deltasplit{} penalty, the cut value decreases by at least one in each iteration, for a maximum of $\cut_\mathcal{H}(R)$ iterations.
\end{theorem}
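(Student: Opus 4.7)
The plan is to show that whenever iteration $i$ is not the final one---equivalently, the loop-continuation test $\alpha_i < \alpha_{i-1}$ (with $\alpha_j := \hlc(S_j)$) passes---we have $\cuth(S_i) < \cuth(S_{i-1})$, using the convention $S_0 := R$. Combined with integrality of $\cuth$ under the \deltasplit{} penalty and the starting value $\cuth(R)$, this yields the stated iteration bound. For each $j$ define $g_j(S) := \cuth(S) - \alpha_j\, \Omega_{R,\varepsilon}(S)$; minimizing $\textbf{H-st-cut}_{\alpha_j}$ is equivalent to minimizing $g_j$ up to the additive constant $\alpha_j \volh(R)$. By definition $g_j(S_j) = 0$, and the non-termination condition at iteration $j+1$ is $g_j(S_{j+1}) < 0$.

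The base case $i=1$ is direct: since $\alpha_0 \volh(R) = \cuth(R)$, we have $g_0(R) = 0$, so non-termination gives $g_0(S_1) < 0$ and hence $\cuth(S_1) < \alpha_0\, \Omega_{R,\varepsilon}(S_1) \leq \alpha_0 \volh(R) = \cuth(R)$, where the middle inequality uses $\Omega_{R,\varepsilon}(S_1) \leq \volh(S_1 \cap R) \leq \volh(R)$. For $i \geq 2$, I use an exchange argument: the optimality of $S_{i-1}$ for $g_{i-2}$ and of $S_i$ for $g_{i-1}$ yields $g_{i-2}(S_{i-1}) \leq g_{i-2}(S_i)$ and $g_{i-1}(S_i) \leq g_{i-1}(S_{i-1})$. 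Summing these, the $\cuth$ terms cancel and regrouping collapses the result to
\begin{equation*}
(\alpha_{i-2} - \alpha_{i-1})\bigl(\Omega_{R,\varepsilon}(S_i) - \Omega_{R,\varepsilon}(S_{i-1})\bigr) \leq 0.
\end{equation*}
Since iteration $i-1$ was not the last, $\alpha_{i-2} > \alpha_{i-1}$, so $\Omega_{R,\varepsilon}(S_i) \leq \Omega_{R,\varepsilon}(S_{i-1})$. Combining with $\alpha_i < \alpha_{i-1}$ and $\Omega_{R,\varepsilon}(S_{i-1}) > 0$ (finiteness of $\hlc(S_{i-1})$),
\begin{equation*}
\cuth(S_i) = \alpha_i\, \Omega_{R,\varepsilon}(S_i) \leq \alpha_i\, \Omega_{R,\varepsilon}(S_{i-1}) < \alpha_{i-1}\, \Omega_{R,\varepsilon}(S_{i-1}) = \cuth(S_{i-1}),
\end{equation*}
which is the required strict decrease.

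For the final sentence of the theorem, $\vw_e(A) = \min\{\delta, |A|, |e\setminus A|\}$ takes nonnegative integer values, so $\cuth(S) \in \mathbb{Z}_{\geq 0}$ for every $S$. A strictly decreasing sequence of nonnegative integers starting from $\cuth(R)$ shrinks by at least one per step, so the algorithm cannot execute more than $\cuth(R)$ iterations beyond the initialization.

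The only delicate point is the exchange step: it is the standard Dinkelbach-style manipulation, but it depends on $S_{i-1}$ being an \emph{exact} minimizer of $g_{i-2}$---which is why the base case must be treated separately, since $R$ is not a priori a minimizer of any $g_j$---and on the sign of $\alpha_{i-2} - \alpha_{i-1}$ being positive, which holds precisely because iteration $i-1$ strictly decreased $\hlc$ and that is exactly what it means to have reached iteration $i$.
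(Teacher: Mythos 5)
Your proof is correct and follows essentially the same route as the paper's: both are Dinkelbach-style arguments that use optimality of consecutive parametric minimizers to show $\Omega_{R,\varepsilon}(S_i)$ is non-increasing, then combine this with the strict decrease of $\alpha$ to conclude $\cuth(S_i)$ strictly decreases, and finally invoke integrality of the \deltasplit{} cut values. The only differences are cosmetic --- you use a symmetric two-sided exchange where the paper uses a one-sided comparison at parameter $\alpha_{j-1}$, and you explicitly treat the base case $S_1$ versus $R$, which the paper leaves implicit.
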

This bound is loose in practice --- the algorithm converges in 2--5 iterations in nearly all of our experiments.
Similar results can be developed for non-integer $\delta$, though we omit the details.


\section{A Strongly-Local Algorithm}\label{sec:strongly_local}
Recall that $\varepsilon$ is a \emph{locality} parameter that controls the sets $S \subseteq V$ for which $\hlc(S) < \infty$. 
If $\varepsilon$ is large, then $\Omega_{R,\varepsilon}(S) < 0$ for many sets that do not substantially overlap with $R$. 
In local clustering applications where $\vol(R)$ is small compared to the entire hypergraph,
there is little gained by considering sets $S$ that share little overlap with $R$. 
In these settings, it is preferable to explore only a small region \emph{nearby} $R$, 
rather than exploring every node and hyperedge in the hypergraph.
Thus, it is natural to choose a larger value of $\varepsilon$ and output low hypergraph conductance sets overlapping with $R$.
Ideally, we want to avoid even looking at the entire hypergraph.

We can formalize this idea via strong locality.
A local clustering algorithm is \emph{strongly-local} if its runtime depends only on the size of the input set and not the entire hypergraph.
In constrast, Alg.~\ref{alg:1} is \emph{weakly-local}, meaning that its output is biased towards a region of the hypergraph, 
but its runtime may still depend on the size of the entire hypergraph.
In this section, we give a strongly-local variant of Alg.~\ref{alg:1}, when $\vol(R) \ll \vol(\bar{R})$ and 
$\varepsilon \gg \vol(R) /\vol(\bar{R})$ is treated as a small constant.
Our procedure generalizes strongly-local methods for minimizing local conductance in graphs~\cite{Orecchia:2014:FAL:2634074.2634168,veldt16simple}.
For the hypergraph setting, we combine previous techniques for local min.\ $s$-$t$ cut computations in graphs with
techniques for converting hypergraph $s$-$t$ cut problems into graph $s$-$t$ cut problems~\cite{lawler1973,veldt2020hypercuts}.

\subsection{Making the Procedure Strongly-Local}
In order to solve the hypergraph $s$-$t$ cut objective~\eqref{hypergraphcut} in strongly-local time, 
we must avoid explicitly constructing $\mathcal{H}_\alpha$. 
We instead begin with a sub-hypergraph $\mathcal{L}$ of $\mathcal{H}_\alpha$, which we call the \emph{local hypergraph}, and alternate between the following two steps:
\begin{enumerate}
	\item Solve a hypergraph minimum $s$-$t$ cut problem on $\mathcal{L}$.
	\item Grow the subgraph $\mathcal{L}$ based on the $s$-$t$ cut solution.
\end{enumerate}
The algorithm proceeds until a convergence criterion is satisfied, at which
point the growth mechanism in Step 2 will stop and the algorithm will
output the minimum $s$-$t$ cut solution for $\mathcal{H}_\alpha$.

\xhdr{Algorithm Terminology}
Let $\mathcal{H} = (V,E)$ be the hypergraph. 
For any $v \in V$, let $E(v) = \{e \in E : v \in e\}$, and define $E(S) = \cup_{v \in S} E(v)$ for any set $S \subseteq V$. Let $\mathcal{H}_\alpha = (V\cup \{s,t\}, E \cup E^{st})$ be defined as in \cref{sec:hlc},
where $E^{st}$ is the terminal edge set.
For $v \in V$, let $e_v^{st}$ denote its terminal edge in $\mathcal{H}_\alpha$ (recall that nodes in $R$ are connected to $s$ and nodes in $\bar{R}$ are connected to $t$),
and let $E^{st}(S)$ denote the set of edges between nodes in $S \cup \{s,t\}$ for any set $S$.
Our goal is to find a min.\ $s$-$t$ cut of $\mathcal{H}_\alpha$ without forming $\mathcal{H}_\alpha$. 
To do so, we assume that we have an oracle that efficiently outputs $E(v)$ for any $v \in V$. 
From $E(v)$, we can extract the \emph{neighborhood} of $v$ in $\mathcal{H}$:
 \begin{equation}
 \mathcal{N}(v) = \{ u \in V \;\vert\; \exists e \in E \text{ such that } u,v \in e\}
 \end{equation} 
For a set $S$, we define $\mathcal{N}(S) = \cup_{v \in S} \mathcal{N}(v)$. 
 
\xhdr{The Local Hypergraph}
Let $\mathcal{L} = (V_L \cup \{s,t\}, E_L\cup E_L^{st})$ denote the {local hypergraph}, a growing subgraph of $\mathcal{H}_\alpha$ over which we will repeatedly solve min.\ $s$-$t$ cut problems. 
We initialize $\mathcal{L}$ to contain all nodes and neighbors of $R$, i.e., $V_L = R \cup \mathcal{N}(R)$, and add the terminal edge for each of these nodes to $E_L^{st}$. The set $E_L$ is initialized to $E(R)$, the set of edges containing at least one node from $R$. As the algorithm progresses, $\mathcal{L}$ grows to include more nodes and edges from $\mathcal{H}_\alpha$, always maintaining that $V_L \subseteq V$, $E_L \subseteq E$, and $E_L^{st} \subseteq E^{st}$. 
For $S \subseteq V_L$, let $\textbf{L-st-cut}_\alpha(S)$ be the value of the $s$-$t$ cut $S\cup \{s\}$ in $\mathcal{L}$. 
Because $\mathcal{L}$ is a sub-hypergraph of $\mathcal{H}_\alpha$, at every step we have that
\begin{equation}
\label{lleh}
\textbf{L-st-cut}_\alpha(S) \leq \textbf{H-st-cut}_\alpha(S) \text{ for all $S \subseteq V_L$}.
\end{equation}
By carefully selecting new nodes and edges to add to $\mathcal{L}$, we will show that the min.\ $s$-$t$ cut solution in $\mathcal{L}$ converges to the min.\ $s$-$t$ cut solution in $\mathcal{H}_\alpha$, without ever fully forming $\mathcal{H}_\alpha$.

 
\xhdr{Two Alternating Steps}
After initializing $\mathcal{L}$, we repeat two steps until convergence: (1) find a min.\ $s$-$t$ cut in $\mathcal{L}$, and 
(2) grow $\mathcal{L}$ based on the solution to the cut problem. 
To grow $\mathcal{L}$ at each iteration, we track which nodes from $\bar{R}$ have had their edge to the sink cut by some 
min.\ $s$-$t$ cut in $\mathcal{L}$ in a previous iteration. 
When $(v, t)$ is cut for the first time,
we expand the local hypergraph by adding the neighbors and edges adjacent to $v$ in $\mathcal{H}_\alpha$:
\begin{itemize}
	\item Update $V_L \leftarrow V_L \cup\mathcal{N}(v)$.
	\item Update $E_L \leftarrow E_L \cup E(v)$ and $E_L^{st} \leftarrow E_L^{st} \cup E^{st} (\mathcal{N}(v))$.
\end{itemize}
At this point, we say that node $v$ has been \emph{explored}, and we maintain a set of nodes $X$ that have been explored at any point during the algorithm. Since $R$ is already contained in $\mathcal{L}$, any new nodes we add to the local hypergraph will be from $\bar{R}$. 
\Cref{alg:local} shows pseudo-code for the overall procedure.

\begin{algorithm}[t]
	\caption{Strongly-Local Min $s$-$t$ cut solver}
	\label{alg:local}
	\begin{algorithmic}
		\State \textbf{Input}: $R$, $\varepsilon$, $\alpha$, and access to $E(v)$ for any $v$ in hypergraph $\mathcal{H}$.
		\State \textbf{Output}: Min $s$-$t$ cut solution $S$ for $\mathcal{H}_\alpha$, objective~\eqref{hypergraphcut}
		\State \texttt{// Initialize Local graph $\mathcal{L}$}
		\State $V_L \leftarrow R\cup \mathcal{N}(R)$, $E_L^{st} \leftarrow   E^{st}(V_L)$, $E_L \leftarrow E(R)$, $X \leftarrow \emptyset$, $N \leftarrow \emptyset$
		\Do
		\State \texttt{// \textbf{Step 1}: Solve a minimum $s$-$t$ cut problem on $\mathcal{L}$.}
		\State $S_L = \argmin_{S \subseteq V_L} \textbf{L-st-cut}_\alpha(S)$
		\State $N = S_L \cap \bar{R} \cap V\backslash X$  (nodes around which to expand $\mathcal{L}$)
		\State \texttt{// \textbf{Step 2}: Grow $\mathcal{L}$.}
		\State $V_L \leftarrow V_L \cup \mathcal{N}(N)$,\; $E_L \leftarrow E_L \cup E(N)$,\; $E_L^{st} \leftarrow E_L^{st} \cup E^{st}(\mathcal{N}(N))$
		\State $X \leftarrow X \cup N$.
		\doWhile{$N \neq \emptyset$}
		\State Return $S_L$
	\end{algorithmic}
\end{algorithm}

\xhdr{Convergence and Locality}
The algorithm terminates when, after a min.\ $s$-$t$ cut computation, no new edges are added $E_L$. 
At this point, the min.\ $s$-$t$ cut set in $\mathcal{L}$ is the min.\ $s$-$t$ cut set of $G_\alpha$. 
\begin{theorem}
	\label{thm:converges}
	The set $S$ returned by Algorithm~\ref{alg:local} minimizes objective~\eqref{hypergraphcut}, the minimum $s$-$t$ cut objective on $\mathcal{H}_\alpha$.
\end{theorem}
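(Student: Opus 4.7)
The plan is to maintain a few structural invariants on the growing local hypergraph $\mathcal{L}$, use them at termination to equate the cut value computed on $\mathcal{L}$ with the cut value on $\mathcal{H}_\alpha$, and then bootstrap the minimality of $S_L$ on $\mathcal{L}$ up to minimality on $\mathcal{H}_\alpha$ via a flow-reachability argument on the directed gadget reduction. The key invariants, all proved by a straightforward induction on iterations of the loop, are (I1) $R \cup X \subseteq V_L$; (I2) for every $v \in R \cup X$, $E(v) \subseteq E_L$ and $\mathcal{N}(v) \subseteq V_L$; and (I3) for every $v \in V_L$, the terminal edge $e_v^{st}$ lies in $E_L^{st}$. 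Each holds at initialization (since $V_L = R \cup \mathcal{N}(R)$, $E_L = E(R)$, $E_L^{st} = E^{st}(V_L)$, $X = \emptyset$) and is preserved by the expansion step, which for every $v \in N$ adds $\mathcal{N}(v)$ to $V_L$, $E(v)$ to $E_L$, and $E^{st}(\mathcal{N}(v))$ to $E_L^{st}$.

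Next, observe that the loop exits only when $N = S_L \cap \bar{R} \cap (V \setminus X) = \emptyset$, which forces $S_L \cap \bar{R} \subseteq X$ and hence $S_L \subseteq R \cup X$. If a hyperedge $e \in E$ has $e \cap S_L \neq \emptyset$, then some $v \in S_L$ lies in $e$, so by (I2) we have $e \in E(v) \subseteq E_L$; equivalently, every $e \in E \setminus E_L$ satisfies $\vw_e(e \cap S_L) = \vw_e(\emptyset) = 0$. Similarly, every terminal edge cut by $S_L$ is incident to a node in $V_L$ and so lies in $E_L^{st}$ by (I3). Summing hyperedge and terminal penalties yields $\textbf{H-st-cut}_\alpha(S_L) = \textbf{L-st-cut}_\alpha(S_L)$, and combined with~\eqref{lleh} this immediately gives $\textbf{H-st-cut}_\alpha(S_L) \leq \textbf{H-st-cut}_\alpha(S)$ for every $S \subseteq V_L$.

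The hard part is ruling out that some $S \not\subseteq V_L$ produces a strictly smaller cut in $\mathcal{H}_\alpha$. I would address this by passing through the directed gadget reduction from Section~\ref{sec:new_splitting}, under which $\textbf{H-st-cut}_\alpha$ becomes a standard directed $s$-$t$ min cut on a graph $G_{\mathcal{H}_\alpha}$, and then reasoning at the level of flows. Take a max flow on the corresponding local gadget graph $G_\mathcal{L}$ that certifies $S_L$ (so the $s$-reachable set in its residual is precisely $S_L \cup \{s\}$ together with the gadget auxiliary nodes of $E_L$ it pulls in) and extend the flow by zero to the rest of $G_{\mathcal{H}_\alpha}$. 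Invariant (I2) guarantees that every hyperedge containing a node of $S_L$, together with its gadget, already lies in $\mathcal{L}$, so every arc out of $S_L \cup \{s\}$ in $G_{\mathcal{H}_\alpha}$ sits inside $G_\mathcal{L}$; arcs outside $G_\mathcal{L}$ carry zero flow and so furnish no residual back-edges out of $S_L \cup \{s\}$. Hence the $s$-reachable set in the residual of $G_{\mathcal{H}_\alpha}$ coincides with that of $G_\mathcal{L}$, the extended flow is already maximum, and $S_L$ is the source side of a min $s$-$t$ cut of $G_{\mathcal{H}_\alpha}$ — equivalently, a minimizer of $\textbf{H-st-cut}_\alpha$ on $\mathcal{H}_\alpha$. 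The most delicate bookkeeping lies around the gadget auxiliary nodes for hyperedges in $E \setminus E_L$: the argument must check that none of them are reachable from $S_L \cup \{s\}$ in the residual without traversing an arc incident to $S_L$, which (I2) precisely forbids.
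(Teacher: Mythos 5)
Your proof is correct, and its first half \emph{is} essentially the paper's entire argument: invariants (I1)--(I3) are a careful expansion of the paper's one-line observation that at termination every node of the output set has been explored (or lies in $R$), so that $\textbf{L-st-cut}_\alpha(S_L) = \textbf{H-st-cut}_\alpha(S_L)$. Where you diverge is the step you call ``the hard part.'' The paper dispatches sets not contained in $V_L$ with a sandwich, $\textbf{L-st-cut}_\alpha(S_L) = \textbf{H-st-cut}_\alpha(S_L) \geq \min_A \textbf{H-st-cut}_\alpha(A) \geq \min_A \textbf{L-st-cut}_\alpha(A) = \textbf{L-st-cut}_\alpha(S_L)$, which forces equality throughout. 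The only ingredient this needs beyond \eqref{lleh} as literally stated is that for \emph{arbitrary} $A \subseteq V$ one has $\textbf{L-st-cut}_\alpha(A \cap V_L) \leq \textbf{H-st-cut}_\alpha(A)$: every hyperedge of $E_L$ is fully contained in $V_L$ (its nodes are added to $V_L$ the moment the edge is added to $E_L$), so its penalty is unchanged when $A$ is replaced by $A \cap V_L$, and every omitted hyperedge or terminal edge contributes a nonnegative amount to $\textbf{H-st-cut}_\alpha(A)$. That one-line monotonicity remark closes exactly the gap you identified, and your invariant (I2) already supplies it. Your alternative---passing to the gadget graph, extending a maximum flow on $G_\mathcal{L}$ by zero, and showing the residual $s$-reachable set does not grow---is sound as far as I can tell, and it is precisely the style of implicit-flow reasoning the paper reserves for \cref{thm:bounds}. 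What it costs you is generality: it presumes the splitting functions are graph-reducible, whereas \cref{thm:converges} and the sandwich proof apply to any splitting function for which a hypergraph minimum $s$-$t$ cut oracle on $\mathcal{L}$ is available. What it buys you is a stronger byproduct (the zero-extended local flow is already maximum on all of $G_{\mathcal{H}_\alpha}$), which is the engine behind the locality bound in \cref{thm:bounds} but is not needed here.
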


Furthermore, the following theorem shows that under reasonable conditions, the growth of the local hypergraph is bounded in terms of the $\vol_\mathcal{H}(R)$. Thus, our algorithm is strongly-local. We use the term \emph{graph-reducible} to refer to any hypergraph cut function for which the hypergraph $s$-$t$ cut problem can be reduced to an equivalent $s$-$t$ cut problem in a directed graph.
\begin{theorem}
	\label{thm:bounds}
	Suppose we have a cardinality-based submodular splitting function scaled with minimum non-zero penalty $1$ (e.g., \deltasplit{})
        and that no nodes in $R$ are isolated.
        Then the local hypergraph $\mathcal{L}$ will have at most $\frac{3}{2}(1 + 1 / \varepsilon) \vol_\mathcal{H}(R)$ hyperedges,
        and the number of nodes and terminal edges will both be at most 
	$k \vol_\mathcal{H}(R)(1 + 1/\varepsilon)$, where $k$ is the maximum size hyperedge in $\mathcal{H}_\alpha$.
\end{theorem}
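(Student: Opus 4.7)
The plan is to proceed in two stages. First, I would bound the total volume of nodes explored across the entire run of \cref{alg:local}; then I would deduce the hyperedge, node, and terminal-edge bounds by direct counting.

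\textbf{Stage 1 (volume of explored nodes).} The main structural claim I would prove is that at termination the cumulative explored set $X \subseteq \bar R$ satisfies $\vol_\mathcal{H}(X) \leq \vol_\mathcal{H}(R)/\varepsilon$. The strategy is to establish a monotonicity invariant: once a node has entered $X$, it remains on the source side of the min $s$-$t$ cut of the local hypergraph in every subsequent iteration. Granted this invariant, at termination $X \subseteq S_L \cap \bar R$, where $S_L$ is the final min cut source side. Choosing $S = \emptyset$ in \eqref{hypergraphcut} gives the trivial upper bound $\textbf{H-st-cut}_\alpha(S_L) \leq \alpha \vol_\mathcal{H}(R)$, and since each $v \in S_L \cap \bar R$ contributes at least $\alpha \varepsilon d_v$ through its saturated terminal edge to $t$, this yields $\alpha \varepsilon \vol_\mathcal{H}(X) \leq \alpha \vol_\mathcal{H}(R)$, i.e., $\vol_\mathcal{H}(X) \leq \vol_\mathcal{H}(R)/\varepsilon$.

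\textbf{Stage 2 (counting).} From the algorithm's specification, $E_L = E(R) \cup \bigcup_{v \in X} E(v) \subseteq E(R \cup X)$, so $|E_L| \leq \sum_{v \in R \cup X} |E(v)|$. The hypothesis that the minimum non-zero splitting penalty is $1$ guarantees $\vw_e(\{v\}) \geq 1$ for every hyperedge $e \ni v$, hence $|E(v)| \leq d_v$, and summing gives $|E_L| \leq \vol_\mathcal{H}(R) + \vol_\mathcal{H}(X) \leq (1 + 1/\varepsilon)\vol_\mathcal{H}(R)$, which already implies the stated $\tfrac32(1+1/\varepsilon)\vol_\mathcal{H}(R)$ hyperedge bound. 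For the node bound, $V_L \subseteq R \cup \mathcal{N}(R \cup X)$, and since each hyperedge contains at most $k$ nodes, $|\mathcal{N}(v)| \leq (k-1)|E(v)| \leq k\, d_v$, giving $|V_L| \leq k \vol_\mathcal{H}(R \cup X) \leq k(1 + 1/\varepsilon)\vol_\mathcal{H}(R)$. Since each node in $V_L$ contributes exactly one terminal edge to $E_L^{st}$, the same bound controls the number of terminal edges.

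\textbf{Main obstacle.} The monotonicity invariant in Stage 1 is the delicate step. When \cref{alg:local} expands $\mathcal{L}$, it adds new nodes together with sink-side terminal edges of capacity $\alpha \varepsilon d_v$ and their incident hyperedges; a priori this extra sink-side capacity could pull previously source-side nodes across to the sink side. The resolution, generalizing what is done for graphs in \cite{veldt16simple,Orecchia:2014:FAL:2634074.2634168}, is to consistently select the min cut with the maximal source side — well defined by the distributive-lattice structure of minimum $s$-$t$ cuts — and then to exploit submodularity of the hypergraph $s$-$t$ cut objective. The key observation is that for each $v \in X$, its terminal edge to $t$ was already cut in the previous iteration's min cut, so adding capacity that touches only newly explored nodes and their neighborhoods cannot strictly benefit moving $v$ to the sink side. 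To make this rigorous in the hypergraph case, I would invoke the graph-reducibility of cardinality-based submodular splitting functions established in \cite{veldt2020hypercuts}, which lets us carry out the monotonicity argument on an equivalent directed graph network rather than on the hypergraph directly.
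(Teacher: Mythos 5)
Your Stage 2 counting is fine (and your union bound $|E_L| \le \sum_{v \in R \cup X} |E(v)| \le \vol_\mathcal{H}(R) + \vol_\mathcal{H}(X)$ is actually a slightly cleaner route than the paper's decomposition into $E(R,R)$, $E(P,P)$, $\partial R$, $\partial P$). The problem is Stage 1: the monotonicity invariant you rest it on --- once a node is explored it stays on the source side of the min $s$-$t$ cut in every later iteration --- is false, and no tie-breaking rule (maximal source side or otherwise) can rescue it. The flaw in your heuristic justification is that the capacity added when $v$ is explored does \emph{not} touch only newly explored nodes: the newly revealed hyperedges of $E(v)$ are incident to $v$ itself, and each one that leads to a sink-attached new neighbor directly increases the price of keeping $v$ on the source side. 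Concretely, take $R=\{r_1,r_2\}$ with edges $(r_1,v),(r_2,v)$ of weight $3$ each, let $v$ additionally touch $u_1,u_2$ via weight-$1$ edges with $d_{u_1}=d_{u_2}=6$, and set $\alpha=1/2$, $\varepsilon = 3/8$. In the first local graph the unique minimizer is $\{r_1,r_2,v\}$ (value $1.5$ versus $3$ for $\emptyset$), so $v$ is explored; after expanding around $v$, the unique minimizer becomes $\emptyset$ (value $3$, versus $3.5$ for $\{r_1,r_2,v\}$ and $3.75$ for adding the $u_i$). So at termination $X=\{v\}$ but $S_L=\emptyset$, and your inference ``$X \subseteq S_L \cap \bar R$, hence $\alpha\varepsilon\vol_\mathcal{H}(X) \le \textbf{H-st-cut}_\alpha(S_L) \le \alpha\vol_\mathcal{H}(R)$'' has nothing to stand on. The bound $\vol_\mathcal{H}(X)\le\vol_\mathcal{H}(R)/\varepsilon$ happens to remain true in this example, but not by your argument.

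The paper proves the volume bound with a strictly weaker invariant that survives this phenomenon: working in the graph reduction $G_i$ of each local hypergraph, there exists a maximum $s$-$t$ \emph{flow} in the final $G_t$ that saturates the terminal edge of every node ever explored. The induction is that $F_{i+1}$ can be obtained from $F_i$ by augmentation, and since augmenting paths are simple $s$-$t$ paths they never route backward along an edge into $t$, so sink edges once saturated stay saturated; the base case holds because every max flow saturates every min-cut edge, and the terminal edges of $N_i$ are cut in iteration $i$. Then $\alpha\varepsilon\vol_\mathcal{H}(P) = \sum_{v\in P}(\text{capacity of } (v,t)) \le \text{value of } F_t \le \alpha\vol_\mathcal{H}(R)$. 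In my example the max flow indeed still saturates $(v,t)$ even though $v$ has left the cut set. If you replace your cut-membership invariant with this flow-saturation invariant, the rest of your proof goes through.
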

Proofs of Theorems~\ref{thm:converges} and~\ref{thm:bounds} are in the
appendix. The minimum value on the splitting function is just a scaling issue,
and the cardinality-based submodular restriction lets us bound set sizes by volumes of those sets.
The assumption that $R$ has no isolated nodes is minor;
these nodes could be removed in a pre-processing step.

\subsection{Runtime Analysis}
\Cref{thm:bounds} gives strongly local runtimes for Alg.~\ref{alg:local} when we use the \deltasplit{} splitting function with $\delta \geq 1$, since it is a cardinality-based, submodular function where the minimum non-zero penalty is $1$. This immediately implies the same runtime result for the all-or-nothing penalty ($\delta = 1$).
We implement \cref{alg:local} for the \deltasplit{} penalty by replacing each edge $e$ with a small directed graph as outlined in \cref{sec:new_splitting}.
If $k$ is the maximum hyperedge size, this graph reduction introduces $2$ auxiliary nodes and at most $(2k + 1)$ 
directed edges for each hyperedge that appear in the local hypergraph.
Combining this with \cref{thm:bounds}, the graph reduction of the largest local hypergraph $\mathcal{L}$ has at most
\[
k\vol_\mathcal{H}(R)(1+ 1/\varepsilon) + 3 (1+1/\varepsilon) \vol_\mathcal{H}(R) = O(k\vol_\mathcal{H}(R)(1+ 1/\varepsilon))
\]
nodes and
\[
k\vol_\mathcal{H}(R)(1+ 1/\varepsilon) + (2k + 1)\frac{3}{2} (1+ 1/\varepsilon) \vol_\mathcal{H}(R) = O(k\vol_\mathcal{H}(R)(1+ 1/\varepsilon))
\]
edges.
For a graph $G = (V,E)$, there is an $s$-$t$ cut algorithm with runtime $O(|V||E|)$~\cite{Orlin:2013:MFO:2488608.2488705}.
In theory, we can use this to solve the hypergraph $s$-$t$ cut problem with \deltasplit{} penalties on the largest local hypergraph $\mathcal{L}$ in time $O(k^2 \vol_\mathcal{H}(R)^2(1+ 1/\varepsilon)^2)$.
The local hypergraph grows by at least one hyperedge (i.e., $(2k+1)$ directed edges) each step, so
we need to solve $O(k \vol_\mathcal{H}(R)(1 + 1/\varepsilon))$ $s$-$t$ cut problems, for an overall runtime of $O(k^3 \vol_\mathcal{H}(R)^3(1+ 1/\varepsilon)^3)$. 


Using high-performance max-flow/min-cut solvers, the runtime of our algorithm is fast in practice --- on
hypergraphs with millions of nodes and edges, roughly a few seconds for small $R$ and a few minutes for large $R$.
Nevertheless, as long as $1/\varepsilon$ is independent of the size of the input hypergraph 
(e.g., $\varepsilon = 1$ is always a valid choice),
then our overall procedure for minimizing localized ratio cuts in hypergraphs is strongly-local.
Furthermore, if the maximum hyperedge size $k$ is a constant, our asymptotic runtime is the same as the runtime for strongly-local \emph{graph} clustering algorithms~\cite{veldt16simple,Veldt2019flow}, which are also effective in practice.
Finally, although our analysis focused on the \deltasplit{} penalty used for our experiments, we can get
a similar runtime for any cardinality-based submodular splitting function, with a slightly worse dependence on $k$.


\section{Ratio Cut Improvement Guarantees}\label{sec:improve}
The HLC objective is meaningful in its own right, and we use it in our experiments;
however, understanding the relationship between HLC and more standard ratio cut objectives that do not inherently depend on $R$ and $\varepsilon$ is also useful.
To this end, we derive guarantees satisfied by \cref{alg:1} and the HLC objective in terms of hypergraph conductance and normalized cut.
Our theory shows that the algorithm output has a ratio cut score that is nearly as good as any other set of nodes that have some overlap with $R$.

Our results in this section are for a fixed input hypergraph $\mathcal{H}$, so we drop the subscript $\mathcal{H}$ to simplify notation.
Throughout this section, let $\varepsilon_0 = \vol(R)/\vol(\bar{R}) \leq 1$ denote the minimum value of the locality parameter.
Setting $\varepsilon = \varepsilon_0$ gives the best cut improvement guarantees, which is always a valid choice.
However, we prove results for more general parameter settings, since, as discussed in \cref{sec:strongly_local}, there are locality and runtime benefits for setting $\varepsilon > \varepsilon_0$. 

\subsection{Hypergraph Conductance Guarantees}
We first generalize previous conductance improvement guarantees for local graph clustering~\cite{Andersen:2008:AIG:1347082.1347154,veldt16simple} to our hypergraph objective.
\begin{theorem}
	\label{thm:cond}
	Let $S^*$ be the set returned by \cref{alg:1} for some $\varepsilon \in (\varepsilon_0, \varepsilon_0 + 1)$, and let $\mu = \varepsilon  - \varepsilon_0 \geq 0 $.
	\begin{enumerate}
		\item For any $T \subseteq R$,  $\cond(S^*) \leq \cond(T).$
		\item For any set $T$ satisfying $\vol(T) \leq \vol(\bar{T})$ and
		\begin{equation}
		\label{assumption1}
		{\textstyle \frac{\vol(T \cap R)}{\vol(T)} \geq \frac{\vol(R)}{\vol(V)} + \gamma \frac{\vol(\bar{R})}{\vol(V)},}
		\end{equation}
		for some $\gamma \in (\mu,1)$, we have that $\cond(S^*) \leq \frac{1}{\gamma -\mu} \cond(T)$.
		Specifically, when $\varepsilon = \varepsilon_0$, $\cond(S^*) \leq (1/\gamma) \cond(T)$.
	\end{enumerate}
\end{theorem}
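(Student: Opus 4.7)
The plan is to exploit the fact that \cref{alg:1} returns a global minimizer $S^*$ of $\hlc_{R,\varepsilon}$, so $\hlc(S^*) \le \hlc(T)$ for every set $T$ with $\Omega_{R,\varepsilon}(T) > 0$. The proof then reduces to two algorithm-independent comparisons that I would chain as
\[
\cond(S^*) \;\le\; \hlc(S^*) \;\le\; \hlc(T) \;\le\; c\cdot\cond(T),
\]
where $c = 1$ in Part~1 and $c = 1/(\gamma - \mu)$ in Part~2. The middle inequality is the defining property of $S^*$; the outer two are purely structural bounds between $\cond$ and $\hlc$ that I would establish in isolation.

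For the left inequality $\cond(S^*) \le \hlc(S^*)$, it suffices to show $\Omega_{R,\varepsilon}(S^*) \le \min\{\vol(S^*), \vol(\bar{S^*})\}$. The bound $\Omega_{R,\varepsilon}(S^*) \le \vol(S^*\cap R) \le \vol(S^*)$ is immediate, so the real work is on the $\vol(\bar{S^*})$ side. Expanding,
\[
\vol(\bar{S^*}) - \Omega_{R,\varepsilon}(S^*) = \vol(V) - 2\vol(S^*\cap R) + (\varepsilon - 1)\vol(S^*\cap \bar R),
\]
and using $\vol(S^*\cap R)\le\vol(R)\le\vol(V)/2$ closes the $\varepsilon\ge 1$ case directly. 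For $\varepsilon < 1$, I would further bound $\vol(S^*\cap\bar R)\le\vol(\bar R)$ and substitute $\vol(R) = \varepsilon_0\vol(\bar R)$, which reduces the right-hand side to $\mu\vol(\bar R)\ge 0$; this is the only step where the constraint $\varepsilon \ge \varepsilon_0$ is actually used.

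For the right inequality, Part~1 is immediate: if $T\subseteq R$ then $\Omega_{R,\varepsilon}(T) = \vol(T)$ and $\vol(T)\le\vol(R)\le\vol(\bar R)\le\vol(\bar T)$, so $\hlc(T) = \cut(T)/\vol(T) = \cond(T)$. For Part~2, $\cond(T) = \cut(T)/\vol(T)$ since $\vol(T)\le\vol(\bar T)$; writing $p = \vol(T\cap R)/\vol(T)$ gives $\Omega_{R,\varepsilon}(T)/\vol(T) = (1+\varepsilon)p - \varepsilon$. Plugging in the hypothesised lower bound on $p$, expanding $\vol(V) = \vol(R) + \vol(\bar R)$, and simplifying with $\vol(R) = \varepsilon_0\vol(\bar R)$ yields, after a short rearrangement, $\Omega_{R,\varepsilon}(T)/\vol(T) \ge (\gamma - \mu) + \mu(\gamma+\varepsilon_0)/(1+\varepsilon_0) \ge \gamma - \mu$, and hence $\hlc(T) \le \cond(T)/(\gamma - \mu)$.

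The main obstacle is the algebra in the Part~2 bound: the hypothesis on $\vol(T\cap R)/\vol(T)$ is phrased in the variables $\vol(R)/\vol(V)$ and $\vol(\bar R)/\vol(V)$, whereas the target constant $\gamma - \mu = \gamma - (\varepsilon - \varepsilon_0)$ is phrased in the two locality parameters, and one must verify that the slack in the hypothesis is exactly enough to absorb the $\varepsilon - \varepsilon_0$ gap. Once this identity is in place, the three inequalities chain together to give both claims.
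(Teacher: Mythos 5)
Your proposal is correct and follows essentially the same route as the paper: the chain $\cond(S^*)\le\hlc(S^*)\le\hlc(T)\le c\cdot\cond(T)$, the key lemma $\Omega_{R,\varepsilon}(S)\le\min\{\vol(S),\vol(\bar S)\}$, and the reduction of Part~2 to showing $\Omega_{R,\varepsilon}(T)\ge(\gamma-\mu)\vol(T)$ are exactly the paper's steps. The only (cosmetic) difference is that you establish the $\vol(\bar S)$ bound via a case split on $\varepsilon\gtrless 1$, whereas the paper first replaces $\varepsilon$ by $\varepsilon_0$ and then uses $\varepsilon_0\le 1$; both arguments are valid.
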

\begin{proof}
  For notational compactness, let $\textbf{v}(S) = \vol(S)$ for a set $S \subset V$. We first prove that $\Omega_{R,\varepsilon}(S) \leq \min\{\vlm(S), \vlm(\bar{S})\}$ for any set $S$,
  which implies that $\cond(S) \leq \hlc(S)$:
  \begin{align*}
      \Omega_{R,\varepsilon}(S) &= \vlm(S \cap R) - \varepsilon \vlm(S \cap \bar{R}) \leq \vlm(S \cap R) \leq  \vlm(S), \\[1mm]
	\Omega_{R,\varepsilon}(S) &\leq 	\vlm(S \cap R) - \varepsilon_0\vlm(S \cap \bar{R})\\
	&=\vlm(R) - \vlm(\bar{S} \cap R) - \varepsilon_0 (\vlm(\bar{R}) - \vlm(\bar{S} \cap \bar{R}))\\
	&\leq \vlm(R) - \varepsilon_0 \vlm(\bar{R}) + \varepsilon_0\vlm(\bar{R} \cap \bar{S}) = \varepsilon_0\vlm(\bar{R} \cap \bar{S}) \leq \vlm(\bar{S}),
	\end{align*}
        where the last equality uses the definition of $\varepsilon$ and the final inequality uses $\varepsilon_0 < 1$.
	Finally, for any $T \subseteq R$ where $\vlm(R) \le \vlm(\bar{R})$, $\hlc(T) = \cond(T)$, which gives the first theorem statement:
        $\cond(S^*) \leq \hlc(S^*) \leq \hlc(T) = \cond(T)$.
	
	For the second statement, if $\Omega_{R,\varepsilon}(T) \leq 0$, then $\hlc(T) = \infty$ and the result is trivial. Assume then that $\Omega_{R,\varepsilon}(T) > 0$. 
	Because $\hlc(S^*) \leq \hlc(T)$, the result will hold if we can prove that $\hlc(T) \leq \frac{1}{\gamma - \mu}\cond(T)$,
        which is true as long as $\Omega_{R,\varepsilon}(T) \geq (\gamma - \mu)\vlm(T)$. We prove this by applying assumption~\eqref{assumption1}. 

	\begin{align*}
	\frac{\Omega_{R,\varepsilon}(T)}{\vlm(T)} 
	&\geq \frac{(1+\varepsilon)\vlm(T \cap R)  - \varepsilon \vlm(T)}{\vlm(T)}
	\geq (1 + \varepsilon)\left(\frac{\vlm(R)}{\vlm(V)} + \gamma \frac{\vlm(\bar{R})}{\vlm(V)}  \right) - \varepsilon \\
	&= \left(1+ \frac{\vlm(R)}{\vlm(\bar{R})}\right)\gamma \frac{\vlm(\bar{R})}{\vlm(V)} + \mu\gamma \frac{\vlm(\bar{R})}{\vlm(V)} + (1+\varepsilon) \frac{\vlm(R)}{\vlm(V)}  - \varepsilon\\
	&= \gamma + \mu\gamma \frac{\vlm(\bar{R})}{\vlm(V)} + (1+\varepsilon) \frac{\vlm(R)}{\vlm(V)}  - \varepsilon\left( \frac{\vlm(R)}{\vlm(V)} + \frac{\vlm(\bar{R})}{\vlm(V)}\right)\\
	&= \gamma + (\mu \gamma - \mu)\frac{\vlm(\bar{R})}{\vlm(V)} - \frac{\vlm(R)}{\vlm(\bar{R})} \frac{\vlm(\bar{R})}{\vlm(V)} + \frac{\vlm(R)}{\vlm(V)} \geq \gamma  - \mu \,.
	\end{align*}
\end{proof}

\subsection{Hypergraph Normalized Cut Guarantees}
Given that conductance and normalized cut differ by at most a factor of two, 
we can translate \cref{thm:cond} into bounds for normalized cut.
Our next theorem, however, obtains better guarantees by directly developing bounds for hypergraph normalized cut.
This demonstrates how our algorithmic framework transcends its relationship with conductance,
as it can be used to find sets that also satisfy strong guarantees for other common objectives.
\begin{theorem}
	\label{thm:ncut}
	Let $S^*$ be the set returned by Algorithm~\ref{alg:1} for some $\varepsilon \in (\varepsilon_0, \varepsilon_0 + 1)$, and let $\mu = \varepsilon  - \varepsilon_0 \geq 0$.
	 If a set $T \subset V$ satisfies $\vol(T) \leq \vol(\bar{T})$ and for some $\beta \in (2\mu/(1+2\mu),1)$ satisfies
		\begin{equation}
                  \label{assumption2}
                  {\textstyle \frac{\vol(T \cap R)}{\vol(T)} \geq \frac{\vol(\bar{T} \cap R)}{\vol(\bar{T})} + \beta,}
                \end{equation}
          we have that $\ncut(S^*) \leq \frac{1}{\beta + 2\mu \beta -2\mu} \ncut(T)$.
          Specifically, when $\varepsilon = \varepsilon_0$, $\ncut(S^*) \leq (1/\beta)\ncut(T)$.
\end{theorem}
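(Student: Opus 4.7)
The plan is to mirror the three-step structure of the conductance result (Theorem~\ref{thm:cond}): sandwich the optimality $\hlc(S^*) \leq \hlc(T)$ between two ``conversion'' inequalities, one converting $\hlc(S^*)$ upward into a bound on $\ncut(S^*)$, and one converting $\ncut(T)$ downward into a bound on $\hlc(T)$. Concretely, I would establish
\[
\ncut(S^*) \leq c_1 \hlc(S^*) \leq c_1 \hlc(T) \leq (c_1/c_2)\ncut(T),
\]
where the two constants $c_1$ and $c_2$ are extracted from the two conversion steps, and the final ratio is shown to equal $1/[\beta(1+2\mu)-2\mu]$. As a preliminary reduction, if $\Omega_{R,\varepsilon}(T) \leq 0$ then $\hlc(T) = \infty$ and the bound is trivial, so I assume $\Omega_{R,\varepsilon}(T) > 0$; note that the hypothesis $\beta > 2\mu/(1+2\mu)$ is exactly what ensures $\beta(1+2\mu)-2\mu > 0$.

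For the first conversion I would use the identity
\[
\ncut(S)/\hlc(S) \;=\; \Omega_{R,\varepsilon}(S)/\vlm(S) + \Omega_{R,\varepsilon}(S)/\vlm(\bar{S}),
\]
which follows from $\ncut(S) = \vol(V)\cut(S)/[\vol(S)\vol(\bar{S})]$, and then apply the two upper bounds on $\Omega_{R,\varepsilon}$ already derived inside the proof of Theorem~\ref{thm:cond}, namely $\Omega_{R,\varepsilon}(S) \leq \vlm(S\cap R) \leq \vlm(S)$ and $\Omega_{R,\varepsilon}(S) \leq \varepsilon_0\vlm(\bar{R}\cap\bar{S}) \leq \varepsilon_0\vlm(\bar{S})$. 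For the second conversion, I would lower-bound $\Omega_{R,\varepsilon}(T)$ by $c_2 \cdot \vlm(T)\vlm(\bar{T})/\vlm(V)$, so that $\hlc(T) = \cut(T)/\Omega_{R,\varepsilon}(T) \leq \ncut(T)/c_2$ after dividing by $\ncut(T) = \vlm(V)\cut(T)/[\vlm(T)\vlm(\bar{T})]$. The overlap assumption~\eqref{assumption2} is equivalent to $\vlm(T\cap R)\vlm(V) \geq \vlm(R)\vlm(T) + \beta\vlm(T)\vlm(\bar{T})$, which I would substitute into the identity $\Omega_{R,\varepsilon}(T) = (1+\varepsilon)\vlm(T\cap R) - \varepsilon\vlm(T)$. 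The coefficient on the leftover $\vlm(T)$ term collapses to $-\mu\vlm(\bar{R})/\vlm(V)$ by the same one-line simplification used in the conductance proof (namely $\vlm(R) - \varepsilon\vlm(\bar{R}) = -\mu\vlm(\bar{R})$), leaving
\[
\Omega_{R,\varepsilon}(T) \;\geq\; \frac{\vlm(T)}{\vlm(V)}\bigl[(1+\varepsilon)\beta\vlm(\bar{T}) - \mu\vlm(\bar{R})\bigr].
\]
Dividing through by $\vlm(T)\vlm(\bar{T})/\vlm(V)$ and using $\vlm(\bar{T}) \geq \vlm(V)/2$ to bound $\vlm(\bar{R})/\vlm(\bar{T})$ from above converts the $\mu\vlm(\bar{R})$ term into a multiple of the desired scale.

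The main obstacle is extracting the exact constant $\beta(1+2\mu) - 2\mu$ in the denominator. In the clean case $\varepsilon = \varepsilon_0$ (so $\mu = 0$), the plan above immediately gives $c_1 = 1+\varepsilon_0$ from the two $\Omega$ bounds and $c_2 = \beta(1+\varepsilon_0)$ from the lower bound, and the $(1+\varepsilon_0)$ factors cancel to yield $\ncut(S^*) \leq (1/\beta)\ncut(T)$ as stated in the special case. For $\mu > 0$ I would need to track the $\mu$-dependence through both conversions simultaneously: the $\mu\vlm(\bar{R})$ term in $c_2$ and the $\varepsilon_0$-dependence in $c_1$ interact nontrivially, and I expect the cleanest route is to replace the $c_1 \leq 1+\varepsilon_0$ bound by the coarser but symmetric $c_1 \leq 2$ (using $\Omega_{R,\varepsilon}(S) \leq \min(\vlm(S),\vlm(\bar{S}))$, valid since $\varepsilon_0 \leq 1$), and then do the final algebra so that the factors of $2$ and $(1+2\mu)$ align. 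Carefully verifying that the resulting denominator is exactly $\beta(1+2\mu) - 2\mu$, and that no strictly stronger bound on $c_2$ is needed, is the step where the most computational care will be required.
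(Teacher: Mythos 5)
Your overall sandwich structure and your first conversion are exactly right: the identity $\ncut(S)/\hlc(S) = \Omega_{R,\varepsilon}(S)/\vol(S) + \Omega_{R,\varepsilon}(S)/\vol(\bar{S}) \leq 1+\varepsilon_0$ is precisely the paper's bound $\tfrac{\vol(\bar{R})}{\vol(V)}\ncut(S) \leq \hlc(S)$ in disguise, and your $\mu=0$ case goes through. The gap is in the second conversion when $\mu>0$. Carrying out your plan literally --- substituting the restated assumption into $\Omega_{R,\varepsilon}(T) = (1+\varepsilon)\vol(T\cap R) - \varepsilon\vol(T)$ and then using $\vol(\bar{R})/\vol(\bar{T}) \leq 2$ --- gives $c_2 = (1+\varepsilon)\beta - 2\mu$ and hence a final constant $\tfrac{1+\varepsilon_0}{(1+\varepsilon)\beta - 2\mu} = \left(\beta + \tfrac{\mu(\beta-2)}{1+\varepsilon_0}\right)^{-1}$. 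This is strictly weaker than the claimed $\left(\beta + 2\mu(\beta-1)\right)^{-1}$ whenever $\beta > 2\varepsilon_0/(1+2\varepsilon_0)$, which is essentially the entire regime of interest since $\varepsilon_0=\vol(R)/\vol(\bar{R})$ is small in local clustering. (Sanity check: as $\varepsilon_0\to 0$ with $\vol(\bar{T})=\vol(V)/2$, your denominator tends to $\beta+\mu\beta-2\mu$ while the theorem needs $\beta+2\mu\beta-2\mu$; you are short by exactly $\mu\beta$.) Your proposed repair --- coarsening $c_1$ from $1+\varepsilon_0$ to $2$ --- moves in the wrong direction, since it can only enlarge $c_1/c_2$.

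The missing idea is that assumption~\eqref{assumption2} must be used a \emph{second} time, to shrink the $\mu$-penalty term itself rather than only to feed the main term. The paper writes $\vol(\bar{R})\,\Omega_{R,\varepsilon}(T) = g(T) - \mu\,\vol(\bar{R})\,\vol(T\cap\bar{R})$ with $g(T) = \vol(\bar{T})\vol(T\cap R) - \vol(T)\vol(\bar{T}\cap R)$. Assumption~\eqref{assumption2} then yields two facts: $g(T)\geq \beta\,\vol(T)\vol(\bar{T})$, and (because $\vol(\bar{T}\cap R)/\vol(\bar{T})\geq 0$) also $\vol(T\cap R)\geq \beta\,\vol(T)$, hence $\vol(T\cap\bar{R})\leq(1-\beta)\vol(T)$. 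Pairing the latter with $\vol(\bar{R})\leq\vol(V)\leq 2\vol(\bar{T})$ bounds the penalty term by $2\mu(1-\beta)\vol(T)\vol(\bar{T})$, and it is precisely this $(1-\beta)$ factor that converts $-2\mu$ into $-2\mu+2\mu\beta$ in the denominator, i.e.\ $\beta-2\mu(1-\beta)=\beta+2\mu\beta-2\mu$. Your route, which only ever exploits the assumption as a lower bound on $\vol(T\cap R)$ inside the $(1+\varepsilon)$-form of $\Omega_{R,\varepsilon}(T)$, cannot recover the $+2\mu\beta$ term; you need the extra inequality $\vol(T\cap\bar{R})\leq(1-\beta)\vol(T)$ applied to the $\mu$-term separately.
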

A full proof is in the appendix. The overlap assumptions~\eqref{assumption1} and~\eqref{assumption2} in \cref{thm:cond,thm:ncut} differ.
The assumption in \cref{thm:cond} matches previous local graph clustering results~\cite{Andersen:2008:AIG:1347082.1347154,veldt16simple}
and measures how much $R$ overlaps with a set $T$.
In contrast, assumption~\eqref{assumption2} provides a more intuitive measure of how much more $R$ overlaps with $T$ than it does with $\bar{T}$.
This is the first application of this type of overlap assumption for cut improvement --- graph or hypergraph.
We next give a simple example for how this overlap assumption and \cref{thm:ncut}
provide meaningful new normalized cut improvement guarantees, even in the well-studied graph setting.

\xhdr{Example}
Consider a hypergraph (or graph) $\mathcal{H} = (V,E)$ containing a low-conductance target set $T$ with $\vol(T) = \vol(\bar{T}) = \vol(V)/2$. Assume that we do not know all of $T$, but we have access to a set $R$ constituting half the volume of $T$, i.e., $R \subset T$ with $\vol(R) = \vol(T)/2$.
Let $S^*$ be the set returned by \cref{alg:1} when $\varepsilon = \vol(R)/\vol(\bar{R})$.
First, we apply \cref{thm:cond} to bound $\cond(S^*)$, where assumption~\eqref{assumption1} holds with $\gamma = 1/3$:
\begin{equation}
{\textstyle \frac{\vol(R\cap T)}{\vol(T)} = \frac{1}{2} = \frac{1}{4} + \frac{1}{3}\cdot\frac{3}{4} = \frac{\vol(R)}{\vol(V)} + \gamma \frac{\vol(\bar{R})}{\vol(V)},}
\end{equation}
and this value of $\gamma$ gives the tightest bound.
Thus, \cref{thm:cond} guarantees that $\cond(S^*) \leq 3\, \cond(T)$.
Next, using the fact that $\cond(S^*) \leq \ncut(S^*) \leq 2\cond(S^*)$,
\[ \ncut(S^*) \leq 2\,\cond(S^*) \leq 2 \cdot 3 \,\cond(T)  \leq 6 \,\ncut(T). \]
On the other hand, assumption~\eqref{assumption2} is satisfied with $\beta = 1/2$, so \cref{thm:ncut} guarantees that $\ncut(S^*) \leq 2 \ncut(T)$.
This is significantly tighter than combining the bound from \cref{thm:cond} and the relationship between normalized cut and conductance.
This result demonstrates that although HLC is presented as a localized variant of conductance, there are also situations in which we can obtain \emph{even better} set recovery guarantees in terms of normalized cut than conductance.
To summarize, our approach returns meaningful results in terms of more than just one clustering objective.


\section{Experiments}\label{sec:experiments}
We call running Alg.~\ref{alg:1} with Alg.~\ref{alg:local} as a subroutine \emph{HyperLocal}, 
since it operates on hypergraphs with a strongly-local runtime. 
Next, we apply HyperLocal to identify clusters of question topics on Stack Overflow and product categories within Amazon reviews.

\subsection{Algorithms and Implementation Details}
We implement HyperLocal in Julia, using a push-relabel implementation of the maximum $s$-$t$ flow method for the underlying $s$-$t$ cut problems.
All experiments ran on a laptop with 8 GB of RAM and a 2.2 GHz Intel Core i7 processor. We provide code and datasets at
\url{https://github.com/nveldt/HypergraphFlowClustering}.

\xhdr{Neighborhood Baselines} If $R_s$ is a set of seed nodes, let $\mathcal{N}(R_s)$ be its one-hop neighborhood.
In the hypergraphs we consider, the one-hop neighborhood of a seed set is often quite large.
We design two baselines for returning a cluster nearby a set of seeds.
$\mathit{TopNeighbors}$ orders nodes in $\mathcal{N}(R_s)$ based on the number of hyperedges that each $v \in \mathcal{N}(R_s)$ shares with at least one node from $R_s$
and outputs the top $k$ such nodes.
Similarly, $\mathit{BestNeighbors}$ orders each node $v \in \mathcal{N}(R_s)$ by the \emph{fraction} of hyperedges incident to $v$ that are also incident to at least one node from $R_s$
and outputs the top $k$.
In our experiments, we choose $k$ to be equal to ground truth cluster sizes, which provides an additional advantage to these baselines.

\xhdr{Clique Expansion + FlowSeed Baselines} FlowSeed \cite{Veldt2019flow} is a flow-based method for solving localized conductance~\eqref{localcond} in graphs.
For one baseline, we convert an input hypergraph to a graph and then run FlowSeed with the same input set $R$ and parameter $\varepsilon$ as we use for HyperLocal.
We consider two types of expansion: replacing a hyperedge $e$ with an unweighted clique, and replacing a hyperedge with a clique where each edge has weight $1/|e|$.
These are representative of existing clique expansion techniques for (local) hypergraph clustering~\cite{li2018tail,Zhou2006learning,hao2017local}. While other local graph clustering methods exist, we focus on comparing against FlowSeed, as it has been shown to outperform other techniques in a number of settings~\cite{Veldt2019flow}. Comparing against FlowSeed also allows us to best highlight the difference between running a flow-based method designed specifically for hypergraphs, versus performing clique reduction and applying a related graph algorithm.
Finally, comparing against non-local hypergraph clustering methods (e.g.,~\cite{panli2017inhomogeneous,panli_submodular}) is infeasible, as these do not seek clusters near a specified input set, and are unable to run on hypergraphs with millions of nodes and hyperedges in a reasonable amount of time.


\subsection{Question Topics on Stack Overflow}
HyperLocal is able to identify clusters of questions associated with the same topic on Stack Overflow.
We represent each question as a node and construct hyperedges from the set of questions answered by a single user.
Tags indicate sets of questions on the same topic (e.g., ``julia'', ``netsuite'', ``common-lisp''),
which we use as ground truth cluster labels (many questions have multiple tags).
The hypergraph has 15,211,989 nodes and 1,103,243 edges, with a mean hyperedge size of 23.7.
The dataset has 56,502 tags.
We use the 45 tags that have between 2,000 and 10,000 questions and a hypergraph conductance score below 0.2 under the all-or-nothing penalty.
Thus, we focus on sets of tags that can reasonably be viewed as modestly-sized clusters in the dataset.

\begin{table}[t]
  \caption{Average runtime in seconds, precision, recall, and F1 scores across 45 target clusters that correspond to question topics on Stack Overflow.
    \emph{Top F1} is the number of times out of 45 that a method/set obtained the top F1 score (including ties).
    Runtimes for last three rows are negligible.
    UCE and WCE indicate \emph{unweighted} and \emph{weighted} clique expansions.
  }
	\label{tab:stack}
	\centering
	\begin{tabular}{llllll}
		\toprule
		\textbf{Method}  & \emph{runtime} & \emph{pr} & \emph{re} & \emph{f1}  &  \emph{Top F1}\\
		\midrule 
		HyperLocal & 25.0 & 0.69 & 0.47 & 0.53 & 29\\
		UCE + FlowSeed & 32.8 & 0.3 & 0.58 & 0.4 & 1\\
		WCE + FlowSeed & 32.9 & 0.3 & 0.58 & 0.4 & 1\\
		BestNeighbors & -- & 0.49 & 0.49 & 0.49 & 11\\
		TopNeighbors & -- & 0.45 & 0.45 & 0.45 & 6\\
		$R$ & -- & 0.3 & 0.6 & 0.4 & 1\\
		\bottomrule
	\end{tabular}
\end{table} 

\begin{figure}[t]
	\centering
	\includegraphics[width=.95\linewidth]{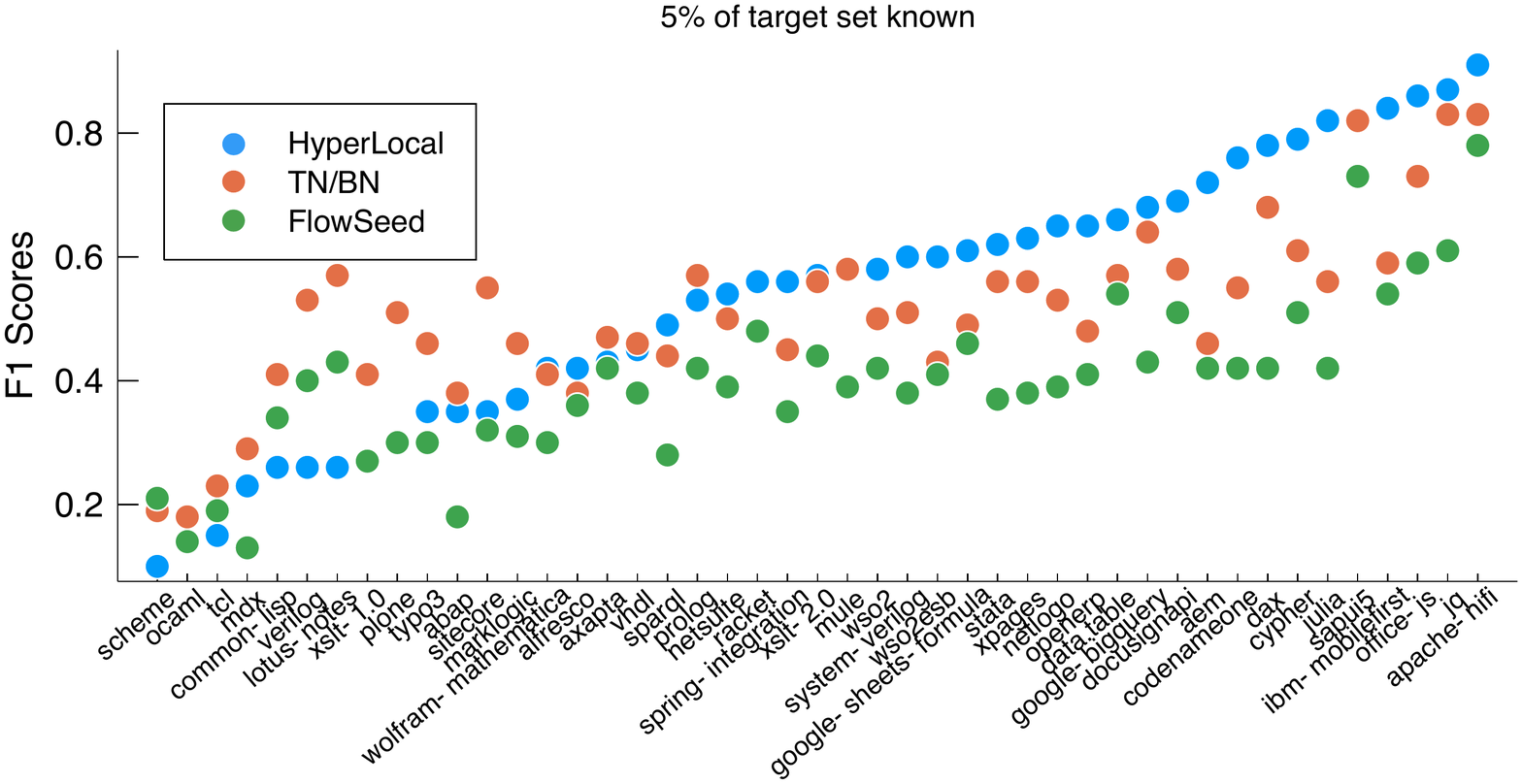}
	\includegraphics[width=.95\linewidth]{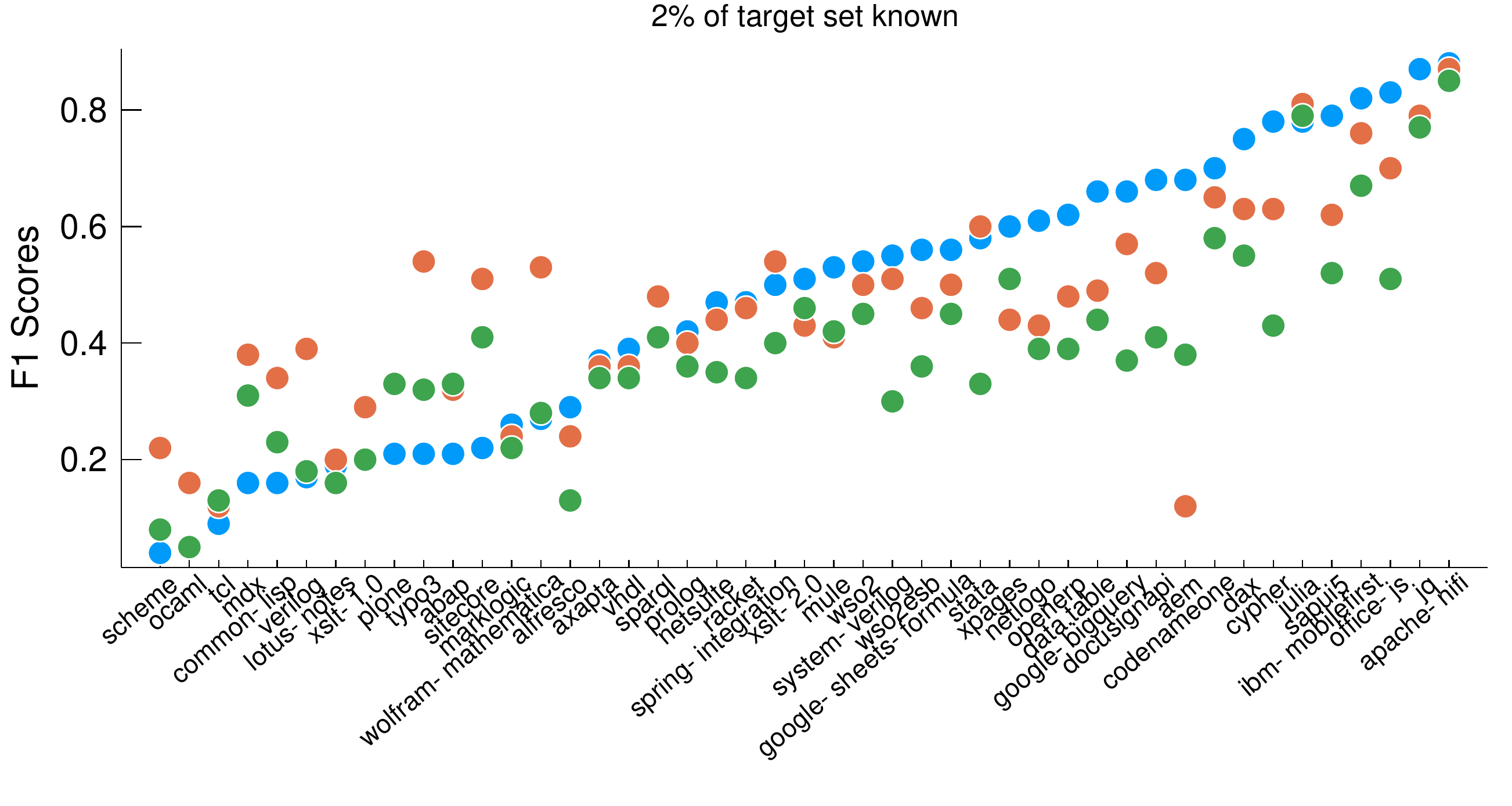}
	\caption{F1 scores for each of 45 clusters in the Stackoverflow hypergraph.
          For FlowSeed, we show the best result from the weighted or unweighted clique expansion;
          we also show the best of BN/TN.
          (Top) When 5\% of the target set is used as a seed set, HyperLocal has the highest mean F1 score, and outperforms all methods in all cases where at least one method has an F1 score above 0.6. (Bottom) Results are very similar when only 2\% of the target set is used as seeds.
        }
	\label{fig:stack}
\end{figure}

\xhdr{Experimental Setup}
To get an input set $R$ for Alg.~\ref{alg:1}, we assume that we are given a small subset of seed nodes $R_s$ from the target cluster and a rough idea of the cluster's size.
HyperLocal is designed to find good clusters by refining a moderately-sized reference set, so we use BestNeighbors to grow $R_s$ into an initial reference set $R \supseteq R_s$ that is refined by HyperLocal.
We ensure HyperLocal finds a cluster that strictly contains $R_s$ by adding infinite-weight edges from 
$R_s$ to the source node in the underlying minimum $s$-$t$ cut problems,
following prior approaches for local graph clustering~\cite{Veldt2019flow}.

For each target cluster $T$, we randomly select 5\% of $T$ as a seed set $R_s$, and use BestNeighbors to grow $R_s$ by an additional $2|T|$ nodes.
This produces a reference set input $R$ for HyperLocal.
We set $\varepsilon = 1.0$ and use the \deltasplit{} penalty with $\delta = 5000$.
This large threshold tends to produce good results (see \cref{sec:varying} for details),
and the threshold is meaningful as some hyperedges contain tens of thousands of nodes.

We run TopNeighbors and BestNeighbors, outputting the top $|T|$ nodes in the ordering defined by each.
Therefore, we give these methods an advantage by assuming that they know the exact size of the target cluster.
We also run FlowSeed on unweighted and weighted clique expansions, using the same parameters as HyperLocal.
In order to use clique expansion without densifying the graph too much and running into memory issues, we first discard hyperedges with 50 or more nodes (around 8\% of all hyperedges). Finally, we ran the full experimental procedure a second time, but starting with a random subset of 2\% of each target cluster, rather than 5\%.

\xhdr{Results}
\Cref{tab:stack} reports the performance of each method, averaged across all 45 clusters, when 5\% of each target set is known. HyperLocal has the highest average F1 score, and obtains the best F1 scores on many more clusters compared to other methods.
\Cref{fig:stack} visualizes F1 scores for individual clusters, both when 5\% and 2\% of the target cluster is known. Even when only 2\% of each target cluster is known, we see nearly identical results.

\subsection{Detecting Amazon Product Categories}
Next, we use HyperLocal to quickly detect clusters of retail products with the same category (e.g., ``Appliances'', ``Software'') from a large hypergraph constructed from Amazon product review data~\cite{ni-etal-2019-justifying}.
We construct hyperedges as sets of products (nodes) reviewed by the same person.
The hypergraph has 2,268,264 nodes and 4,285,363 hyperedges, with a mean hyperedge size just over 17.
We use product category labels as ground truth cluster identities and consider
the 9 smallest clusters, each of which represents only a very small fraction of
nodes in the hypergraph (\cref{tab:amazon}).

\xhdr{Experimental Setup}
We use $\varepsilon = 1.0$ and the standard all-or-nothing cut penalty, i.e.,
$\delta = 1$ for the \deltasplit{} penalty. Unlike the Stack Overflow dataset,
this smaller $\delta$ tends to work well (see \cref{sec:varying}).  For the six
smallest clusters (under 200 nodes), we use $|R_s| = 10$ random seed nodes and
use BestNeighbors to grow an initial cluster $R$ with $R_S$ plus 200 additional
nodes for HyperLocal to refine.  For the two clusters closer to 1000 nodes, we
use 50 seed nodes, which we grow by another 2,000 nodes using BestNeighbors.  The
largest two clusters have around 5,000 nodes. For these, we extract a random
subset of 200 nodes and use BestNeighbors to add 10,000 neighbors to form $R$. 

\xhdr{Results}
We compare HyperLocal to BestNeighbors and TopNeighbors in terms of F1 score.
(We also attempted to run FlowSeed, but were unable to perform a clique expansion on the hypergraph due to memory constraints, as the expanded graph becomes too dense even after removing all hyperedges with 50 nodes or more.)
\Cref{tab:amazon} reports F1 detection scores, averaged across 5 different trials with different random seed sets.
In all cases, HyperLocal substantially improves upon the baselines.

To test robustness, we ran numerous additional experiments on the smallest five clusters while varying the locality parameter $\varepsilon \in \{10^{-3}, 10^{-2}, 10^{-1}, 1, 10\}$ and reference set size $|R| \in \{200, 300, 500\}$.
In all cases, we obtained results similar to those in \cref{tab:amazon}.
Regarding runtime, HyperLocal takes between a few seconds and a few minutes, depending on the target cluster size.
This is remarkably fast considering that the method is repeatedly finding minimum $s$-$t$ cuts in a hypergraph with millions of nodes and hyperedges,
where the mean hyperedge size is above 17.

\begin{table}[t]
	\caption{The first column is the size of each Amazon product-category cluster $T$; 
	the second is HyperLocal (HL) runtime in seconds. 
	The remaining columns are F1 scores for HL, BestNeighbors (BN), TopNeighbors (TN), and the reference set $R$. 
	Except in one case, HyperLocal significantly improves on the F1 score of $R$ and always outperforms BN and TN.
	}
	\label{tab:amazon}
	\centering
	\begin{tabular}{lllllll}
		\toprule
		\textbf{Cluster} &$|T|$& run &  HL & BN & TN & $R$  \\
		\midrule 
		Amazon Fashion & 31 & 3.5& 0.83 & 0.77 & 0.6  & 0.67 \\
		All Beauty & 85 & 30.8& 0.69 & 0.6 & 0.28  & 0.58 \\
		Appliances & 48 & 9.8& 0.82 & 0.73 & 0.56  & 0.68 \\
		Gift Cards & 148 & 6.5& 0.86 & 0.75 & 0.71  & 0.63 \\
		Magazine Subscriptions & 157 & 14.5& 0.87 & 0.72 & 0.56  & 0.76 \\
		Luxury Beauty & 1581 & 261& 0.33 & 0.31 & 0.17  & 0.41 \\
		Software & 802 & 341& 0.74 & 0.52 & 0.24  & 0.42 \\
		Industrial \& Scientific & 5334 & 503 & 0.55 & 0.49 & 0.15  & 0.35 \\
		Prime Pantry & 4970 & 406 & 0.96 & 0.73 & 0.36  & 0.23 \\
		\bottomrule
	\end{tabular}
\end{table} 

\subsection{Varying Splitting Functions}\label{sec:varying}
 \begin{figure}[t]
	\centering
	\subfloat[Stackoverflow Question Topics \label{fig:stackdelta}] 
	{\includegraphics[width=.475\linewidth]{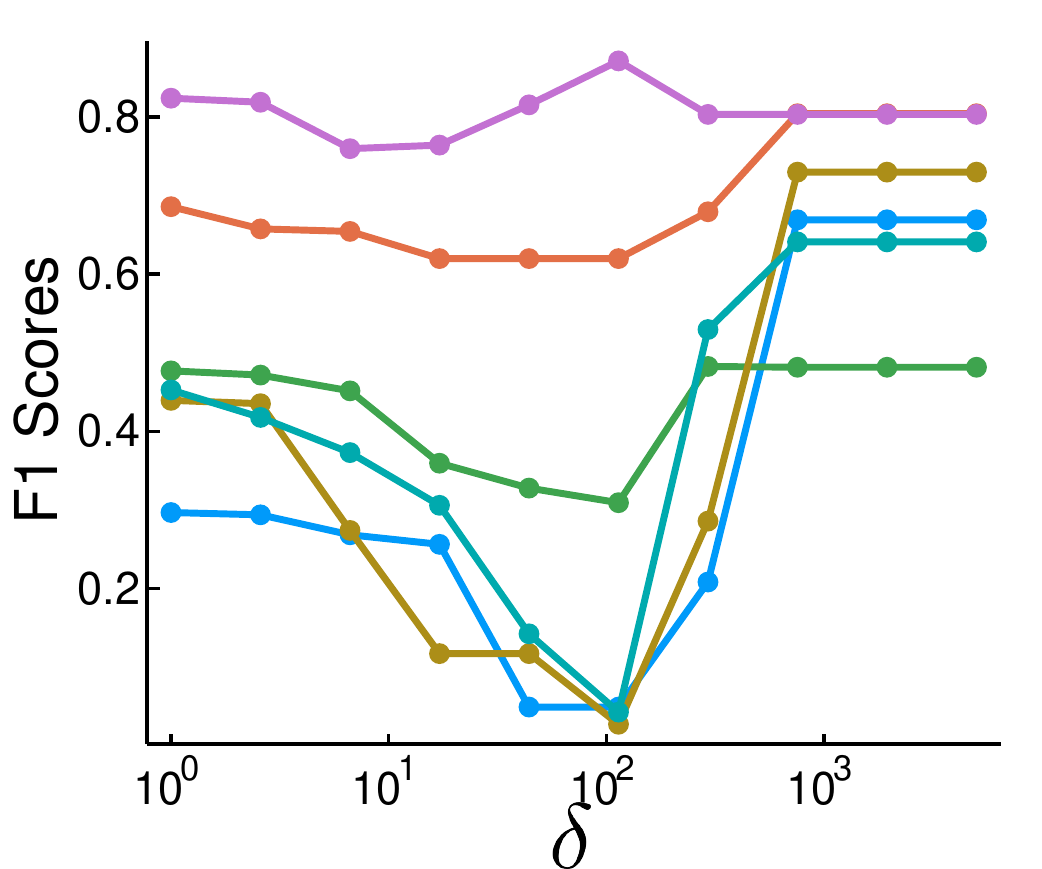}}
	\subfloat[Amazon Product Categories \label{fig:amazondelta}] 
	{\includegraphics[width=.475\linewidth]{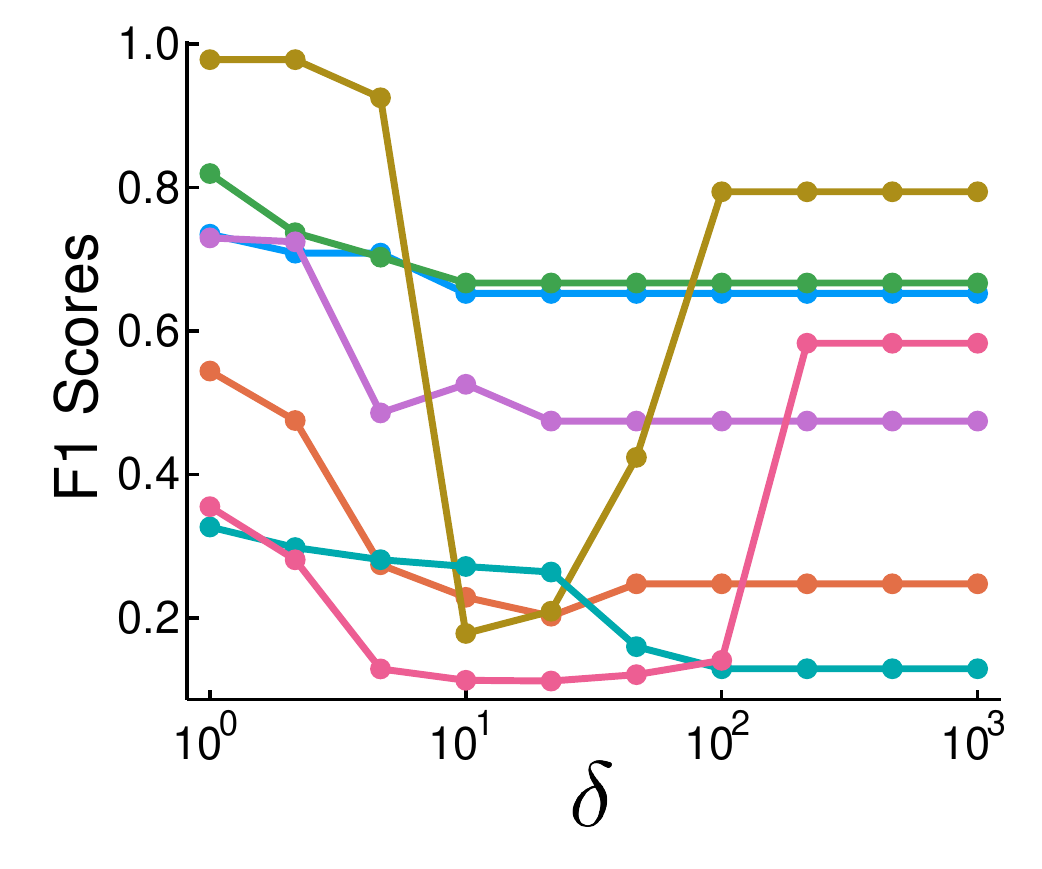}}\hfill
	\caption{F1 scores for HyperLocal for varying $\delta$.
          Each curve corresponds to a different cluster that HyperLocal is trying to detect from a seed set.
          For six Stack Overflow clusters, nearly all curves are maximized when $\delta > 1000$.
          For the seven smallest Amazon clusters (left), $\delta = 1.0$ leads to the highest F1 scores except in one case. }
	\label{fig:varydelta}
\end{figure}
In a survey on flow-based local \emph{graph} clustering, Fountoulakis et al.~\cite{fountoulakis2020flowbased} provide numerous guidelines and examples for choosing input sets $R$ and setting the resolution parameter $\varepsilon$. These guidelines apply in the same fashion to our hypergraph generalization. 


We additionally consider how different hypergraph splitting functions affect the output solution. We ran HyperLocal with the \deltasplit{} splitting function for varying $\delta$ for a handful of clusters from both datasets, measuring the target cluster recovery F1 score (\cref{fig:varydelta}). We tested both integer and non-integer values of $\delta$, and found no meaningful difference in runtime.
 For the Stack Overflow hypergraph, HyperLocal's performance plummets for $\delta$ near 100
 and performance is maximized for very large $\delta$ (\cref{fig:stackdelta}). 
 In contrast, for the Amazon hypergraph, the maximum F1 for each cluster is almost always obtained when $\delta = 1.0$ (\cref{fig:amazondelta}).
 This suggests that if one has access to a subset of ground truth clusters in a hypergraph,
 running a similar set of experiments provides a simple and effective strategy for choosing a value of $\delta$ to use when 
 searching for other, similar clusters.


\section{Conclusion}
We have presented the first strongly-local method for localized hypergraph clustering, 
which comes with theoretical guarantees in terms of output quality and runtime.
One attractive property of our method is that it can leverage recent results on generalized hypergraph s-t cut problems in order to minimize localized ratio cut objectives with general hypergedge splitting functions.
Here we have considered one new parametric family of splitting functions (\deltasplit{}).
Finding other families of splittings functions could be used to detect other types of meaningful clustering structure in large networks offers opportunity for future research.

\begin{acks}
We acknowledge and thank several funding agencies. This research was supported by a Vannevar Bush Faculty Fellowship, a Simons Investigator grant, NSF Award DMS-1830274, ARO Award W911NF19-1-0057, and ARO MURI.
\end{acks}

\bibliographystyle{plain}
\bibliography{short}

\section{Proofs for Theorems}

\subsection{Proof of Theorem~\ref{thm:itbound}}
\begin{proof}
	We want to show that $\cut_\mathcal{H}(S)$ strictly decreases in each pass of the while loop (except the last) in Algorithm~\ref{alg:1}. 
	To simplify notation, we drop the terms $\mathcal{H}$, $R$ and $\varepsilon$ from function subscripts, since these are fixed and clear from context. 
	Consider a pair of consecutive $s$-$t$ cut solutions where there is a strict improvement in HLC score. Starting with some value $\alpha_{j-1}$, let $S_j = \argmin \textbf{H-st-cut}_{\alpha_{j-1}}(S)$, and $\alpha_j = \hlc(S)$, with $\alpha_j < \alpha_{j-1}$. Let $S_{j+1} = \textbf{H-st-cut}_{\alpha_{j}}(S)$ be the set obtained in the next pass through the while loop, with $\alpha_{j+1} =  \hlc(S_{j+1})$. Since we are assuming that HLC improves in both of these steps, we have $\alpha_{j+1} < \alpha_j < \alpha_{j-1}$. Now observe that
	\begin{align*}
	&\textbf{H-st-cut}_{\alpha_{j-1}}(S_j)\\
	&=  \cut(S_j) + \alpha_{j-1} \vol(\bar{S}_j \cap R) + \alpha_{j-1} \varepsilon \vol(S_j \cap \bar{R}) \\
	&= \cut(S_j) - \alpha_{j-1} \vol(S_j \cap R) + \alpha_{j-1} \varepsilon \vol(S_j \cap \bar{R})+ \alpha_{j-1} \vol(R)\\
	  &= \cut(S_j) - \alpha_{j-1}\Omega(S_j) + \alpha_{j-1}\vol(R) \\
          &= \alpha_{j-1}\vol(R) + \Omega(S_j)(\hlc(S_j) - \alpha_{j-1}) \\
          &= \alpha_{j-1}\vol(R) + \Omega(S_j)(\alpha_j - \alpha_{j-1}). 
	\end{align*}
	The same essential steps show that
	\[
	\textbf{H-st-cut}_{\alpha_{j-1}}(S_{j+1}) = \alpha_{j-1}\vol(R) + \Omega(S_{j+1}) (\alpha_{j+1} - \alpha_{j-1}).
	\]
	We know that $\textbf{H-st-cut}_{\alpha_{j-1}}(S_j) \leq \textbf{H-st-cut}_{\alpha_{j-1}}(S_{j+1})$, since $S_j$ is the optimal $s$-$t$ cut solution for parameter $\alpha_{j-1}$. This implies that
	\[
	\Omega(S_j) (\alpha_j - \alpha_{j-1}) \leq \Omega(S_{j+1})(\alpha_{j+1} - \alpha_{j-1}),
	\]
	which in turn means that $\Omega(S_{j+1}) < \Omega(S_{j})$, since $(\alpha_{j+1} - \alpha_{j-1}) < (\alpha_{j} - \alpha_{j-1}) < 0$. Finally, because the HLC score and its denominator $\Omega$ decrease when going from $S_j$ to $S_{j+1}$, it must also be the case that $\cut(S_{j+1}) < \cut(S_j)$. Thus, until the last step of Algorithm~\ref{alg:1}, the cut function is strictly decreasing.
\end{proof}

\subsection{Proof of Theorem~\ref{thm:converges}}
\begin{proof}
  For any $v \in S$, the set of nodes and edges that are adjacent to $v$ in $\mathcal{L}$ is exactly the same as the set of nodes and edges that are adjacent to $v$ in $\mathcal{H}_\alpha$.
  The reason is that when the algorithm terminates ($S_{L} = S_{L-1}$ for some $L$), any $v$ in the output has either been \emph{explored} (or is in $R$) and had its neighbors added to $\mathcal{L}$.
  Therefore, $\textbf{L-st-cut}_\alpha(S) = \textbf{H-st-cut}_\alpha(S)$.
  Since the function $\textbf{L-st-cut}_\alpha$ is a lower bound on $\textbf{H-st-cut}_\alpha$ in general,
	\begin{align*}
	\textbf{L-st-cut}_\alpha(S) &= \textbf{H-st-cut}_\alpha(S) 
	\geq \min_A \textbf{H-st-cut}_\alpha(A) \\
	&\geq \min_A \textbf{L-st-cut}_\alpha(A) = \textbf{L-st-cut}_\alpha(S)\,.
	\end{align*}
	Thus, $S = \argmin_A \textbf{H-st-cut}_\alpha(A)$.
\end{proof}

\subsection{Proof of Theorem~\ref{thm:bounds}}
\begin{proof}
  While Algorithm~\ref{alg:local} does not rely on explicitly applying graph reduction techniques nor computing maximum $s$-$t$ flows, our proof will rely on the existence of both, as well as on a basic understanding of the max-flow min-cut theorem.
	
	\xhdr{Implicit graph $s$-$t$ cuts}
	Let $L^i = (V_i \cup \{s,t\}, E_i \cup E_i^{st})$ be the local hypergraph over which we solve a minimum $s$-$t$ cut in the $i$th iteration of the algorithm, where $V_i \subseteq V$, $E_i \subseteq E$, and $E_i^{st} \subseteq E^{st}$. Let $S_i$ be the minimum $s$-$t$ cut set for $L_i$, and let $N_i \subseteq S_i$ be the set of nodes that are \emph{explored} in the $i$th iteration, i.e., nodes whose terminal edges are cut for the first time in iteration $i$.
	Since we assume that the hypergraph cut function is graph reducible, for each $L_i$ there exists a graph $G_i$ with the same set of nodes $V_i \cup\{s,t\}$,
        plus potentially other auxiliary nodes, such that the minimum $s$-$t$ cut value in $G_i$ is the minimum $s$-$t$ cut value in $L_i$.
        Formally, let $S'_i$ be the minimum $s$-$t$ cut set in $G_i$ (excluding $s$ itself), so that $S_i = S_i' \cap V_i$. In other words, if we exclude auxiliary nodes, the minimum $s$-$t$ cut set in $G_i$ is the minimum $s$-$t$ cut set in $L_i$.
	
	\xhdr{Bounding set sizes}
	Let $S$ be any subset of vertices such that $S$ contains no isolated nodes.
        Let $E(S,S)$ denote the set of hyperedges that are completely contained inside $S$.
        We have the following bounds:
	\begin{align}
	\label{partialS}
	|\partial S| &\leq \vol_\mathcal{H}(S) \\
	\label{S}
	|S| & \leq \vol_\mathcal{H}(S) \\
	\label{ES}
	E(S,S) &\leq \frac{\vol_\mathcal{H}(S)}{2}\,.
	\end{align}
        These bounds use the theorem assumptions on the splitting function; the fact that the minimum
        weight is one means that the volume of a node is equal to its degree.
	The first bound is tight whenever every edge that is cut by $S$ contains exactly one node from $S$.
        The second is tight when every node in $S$ has degree one.
        Bound~\eqref{ES} follows from the fact that every hyperedge is of size at least 2, and therefore each hyperedge that is completely contained $S$ is made up of at least two nodes from $S$.
        For $k$-uniform hypergraphs, $E(S,S) \leq \frac{\vol_\mathcal{H}(S)}{k}$, though we use the bound~\eqref{ES} so that we can apply our results more generally. 

        Assume \cref{alg:local} terminates after iteration $t$, so that $L_t$ is the largest local hypergraph formed.
        Let $P$ denote the set of nodes that were \emph{explored} at some point during the algorithm:
	\begin{equation}
	\label{p}
	P = \bigcup_{i=1}^t N_i\,.
	\end{equation}
	We will prove later that the volume of $P$ can be bounded as follows:
	\begin{equation}
	\label{pvol} 
	\vol_\mathcal{H}(P) \leq \frac{\vol_\mathcal{H}(R)}{\varepsilon}.
	\end{equation}
	For now, we assume this to be true and use it to prove the bounds given in the statement of the theorem.
	
	Let $Q$ denote the set of nodes in $L_t$ that were never {explored}.  The size of this set can be bounded as follows:
	\begin{equation}
	|Q| \leq (k-1)(|\partial R| + |\partial P|).
	\end{equation}
	This bound will often be quite loose in practice. However, in theory it is possible for a hyperedge in $L_t$ to contain only one node from $R\cup P$, and $(k-1)$ nodes from the set $Q$. We bound the total number of nodes in $L_t$ with help from Eqs.~\eqref{partialS},~\eqref{S}, and~\eqref{p}:
	\begin{align*}
	|V_t| &= |R| + |P| + |Q| \leq \vol_\mathcal{H}(R) + \vol_\mathcal{H}(P) + (k-1)(|\partial R| + |\partial P|)\\
	&\leq \vol_\mathcal{H}(R)(1 + 1/\varepsilon) + (k-1)\big(\vol_\mathcal{H}(R) + \vol_\mathcal{H}(P)  \big) \\
	&\leq k \vol_\mathcal{H}(R)(1 + 1/\varepsilon).
	\end{align*}	
	Note also that $|V_t|$ is the exact number of terminal edges in $L_t$, so we also have a bound on the number of terminal edges.
	
	We can bound the number of hyperedges $|E_t|$ in $L_t$ above by $|E(R,R)| + |E(P,P)| + |\partial R| + |\partial P|.$ Note that any hyperedge that includes a node from $Q$ is accounted for by the terms $|\partial R|$ and $|\partial P|$. We again use bounds~\eqref{partialS},~\eqref{S},~\eqref{ES} and~\eqref{p} to bound the number of hyperedges in terms of $\vol(R)$:
	\begin{align*}
	|E_t| &\leq |E(R,R)| + |E(P,P)| + |\partial R| + |\partial P|\\ &\leq \frac{3}{2}\big(\vol_\mathcal{H}(R) + \vol_\mathcal{H}(P)  \big)
	\leq \frac{3}{2}\Big(1 + \frac{1}{\varepsilon}\Big) \vol_\mathcal{H}(R).
	\end{align*}
	
	The last step of the proof is to show the volume bound on $P$ in~\eqref{p}, which we do by proving the existence of a maximum $s$-$t$ flow on a graph reduction of $L_t$ with certain properties.
	
	\xhdr{Bounding $P$ with an implicit maximum flow argument}
	The max-flow min-cut theorem states that the value of the minimum $s$-$t$ cut in a graph $G$ is equal to the maximum $s$-$t$ flow value in $G$.  An edge is \emph{saturated} if the value of the flow on an edge equals the weight of the edge, which always upper bounds the flow. Given a set of edges $C$ defining a minimum $s$-$t$ cut $G$, \emph{any} maximum $s$-$t$ flow $F$ in $G$ must saturate \emph{all} edges in $C$. If any edge in $C$ were not saturated by $F$, then the flow value would be strictly less than the cut value, contradicting the optimality of either $F$ or $C$. We will use this understanding of the max-flow min-cut theorem to prove the existence of a flow that saturates all edges of \emph{explored} nodes in the local hypergraph.
	
	The set $N_1$ is made up of all nodes whose edge to the sink is cut in $G_1$ when we compute a minimum $s$-$t$ cut. 
	We know that even if we do not compute it explicitly, there exists some flow $F_1$ in $G_1$ that saturates all edges between $N_1$ and the sink $t$.
        In the next iteration, $N_2$ is the set of nodes whose edges to the sink are cut for the first time.
        One way to compute a maximum $s$-$t$ flow $F_2$ in $G_2$ is to start with $F_1$, the maximum $s$-$t$ flow in $G_1$, and then find new augmenting flow paths until no more flow can be routed from $s$ to $t$. We can assume without loss of generality that the terminal edges of $N_1$ remain saturated by $F_2$, since there can be no net gain from reversing the flow on a saturated edge to the sink. As a result, the flow $F_2$ will saturate the terminal edges of $N_1$ \emph{as well as} all terminal edges of $N_2$, which are cut by the minimum $s$-$t$ cut in $G_2$. Continuing this process inductively, we note that in the $i$th iteration there exists some flow $F_i$ that saturates \emph{all} the terminal edges to the sink that have been cut by some $s$-$t$ cut in a previous iteration. In other words, there exists some flow $F_t$ in $G_t$ that saturates the terminal edge of every node in $P = \cup_{i=1}^t N_i$. 
	Recall that
	\begin{equation}
	\text{Weight of terminal edges of $P$} = \sum_{v \in P} \alpha \varepsilon d_v = \alpha \varepsilon \vol_\mathcal{H}(P)\,.
	\end{equation}
	Finally, observe that the minimum $s$-$t$ cut score in $G_t$ is bounded above by $\alpha \vol_\mathcal{H}(R)$, since this is the weight of edges adjacent to the source. This provides an upper bound on the weight of $P$'s terminal edges, implying the desired bound on the volume of $P$:
	\begin{equation}
	\alpha \varepsilon \vol_\mathcal{H} (P) \leq \alpha \vol_\mathcal{H}(R) \implies \vol_\mathcal{H}(P) \leq \frac{\vol_\mathcal{H}(R)}{\varepsilon}.
	\end{equation}
\end{proof}

\subsection{Proof of Theorem~\ref{thm:ncut}}
\begin{proof}
	Assume throughout that we deal only with sets $S$ satisfying $\Omega_{R, \varepsilon}(S) > 0$, to avoid trivial cases. 
	We begin by defining
	\begin{equation}
	\label{g1}
	g(S) = \vol(\bar{R}) \cdot\vol(S \cap R) - \vol(R) \cdot\vol(S \cap \bar{R}) .
	\end{equation}
	Dividing every term in $g(S)$ by $\vol(\bar{R})$ gives
	\begin{equation}
	{g(S)}/{\vol(\bar{R})} = \vol(S \cap R) - \varepsilon_0 \vol(S \cap \bar{R}).
	\end{equation}
	This allows us to re-write the HLC objective as
	\begin{equation}
	\label{hlc_2}
	\hlc = \vbr \cdot\frac{\cut(S)}{g(S)- \mu\vol(\bar{R})  \cdot\vol(S \cap \bar{R})}.
	\end{equation}
	Applying a few steps of algebra produces another useful characterization of the function $g$:
	\begin{equation}
	\label{g2}
	\begin{split}
	g(S) &=\vol(\bar{R}) \cdot\vol(S \cap R) - \vol(R) \cdot\vol(S \cap \bar{R}) \\
	&= \vol(V) \cdot\vol(S \cap R) - \vol(R) \cdot\vol(S) \\
	&= \big(\vol(V) - \vol(S)\big) \cdot\vol(S \cap R) - \vol(S)\cdot \vol( \bar{S} \cap R) \\
	&=\vol(\bar{S})\cdot \vol(S \cap R) - \vol(S) \cdot\vol( \bar{S} \cap R) \,.
	\end{split}
	\end{equation}
	This characterization of $g$ allows us to see that for any $S \subset V$:
	\begin{equation}
	\label{first bound}
	\begin{split}
	\vol(\bar{S}) \cdot \vol(S)
	&\geq \vol(\bar{S}) \cdot\vol(S \cap R)  \geq g(S) \\
	&\geq g(S) - \mu \vbr \cdot\vol(S \cap \bar{R}).
	\end{split}
	\end{equation}
	We use this to upper bound $\ncut(S)$ in terms of $\hlc(S)$:
	\begin{align}
	\label{lower}
	\frac{\vbr}{\vol(V)} \cdot \ncut(S)  &= \vbr \cdot \frac{\cut(S)}{\vol(S) \cdot\vol(\bar{S})} \leq \hlc(S).
	\end{align}
	Next we need to prove a \emph{lower} bound on the normalized cut score of $T$. We again use the characterization of $g$ given in~\eqref{g2}, this time in conjunction with property~\eqref{assumption2}, satisfied by $T$, to see that
	\begin{equation}
	\label{gt}
	g(T) \geq \beta \vol(T) \cdot\vol(\bar{T}).
	\end{equation}
	Property~\eqref{assumption2} also implies that
	\begin{align*}
	1 - \frac{\vol(\bar{T}\cap R)}{\vol(T)} = \frac{\vol(T\cap R)}{\vol(T)} \geq \beta \implies \vol(\bar{T} \cap R) \leq (1-\beta) \vol(T).
	\end{align*}
	Combining this with $\vol(\bar{R}) \leq \vol(V) \leq 2\vol(\bar{T})$ produces
	\begin{equation}
	\label{helpful2}
	\vol(\bar{R}) \cdot \vol(T \cap \bar{R}) \leq 2(1-\beta) \vol(T) \cdot\vol(\bar{T}).
	\end{equation}
	Inequalities~\eqref{gt} and~\eqref{helpful2} together imply that
	\begin{equation}
	\label{main_inequalities}
	\begin{split}
	\vbr \cdot \Omega_{R,\varepsilon}(T)
	&=g(T) - \mu \vbr \cdot\vol(T \cap \bar{R})\\
	&\geq \big(\beta - 2\mu(1-\beta) \big)\vol(T) \cdot\vol(\bar{T}).
	\end{split}
	\end{equation}
	Finally, we put together the bound~\eqref{lower}, the characterization of the HLC objective given in~\eqref{hlc_2}, and inequality~\eqref{main_inequalities}, to see that
	\begin{align*}
	  \frac{\vbr}{\vol(V)} \cdot \ncut(S^*)
	&\leq \hlc(S^*) \leq \hlc(T) \\
	& = \vbr \cdot \frac{\cut(T)}{g(T) - \mu \vbr \cdot\vol(T \cap \bar{R})} \\
	& \leq \vbr \cdot \frac{1}{\beta - 2\mu(1-\beta)} \cdot  \frac{\cut(T)}{\vol(T) \cdot\vol(\bar{T})} \\
	&= \frac{\vbr}{\vol(V)}\cdot \frac{\ncut(T)}{\beta - 2\mu(1-\beta)}\,.
	\end{align*}
	Dividing through by $\frac{\vbr}{\vol(V)}$ yields the desired bound on normalized cut.
\end{proof}

\end{document}